\newcommand{\ubar}[1]{\underaccent{\bar}{#1}}
\DeclareMathOperator\supp{supp}
\DeclareMathOperator\conv{conv}
\DeclareMathOperator*{\argmax}{arg\,max}
\newtheorem{theorem}{Theorem}[section]
\newtheorem{proposition}[theorem]{Proposition}
\newtheorem{lemma}[theorem]{Lemma}
\newtheorem{corollary}[theorem]{Corollary}
\newtheorem{claim}[theorem]{Claim}
\theoremstyle{definition}
\newtheorem{definition}[theorem]{Definition}
\newtheorem{example}[theorem]{Example}
\newtheorem{remark}[theorem]{Remark}
\definecolor{backcolour}{rgb}{0.63, 0.79, 0.95}
\lstdefinestyle{mystyle}{
  backgroundcolor=\color{backcolour},
  basicstyle=\ttfamily\footnotesize,
  breakatwhitespace=false,         
  breaklines=true,                 
  captionpos=b,                    
  keepspaces=true,                 
  numbers=left,                    
  numbersep=5pt,                  
  showspaces=false,                
  showstringspaces=false,
  showtabs=false,                  
  tabsize=2
}
\begin{document}
\title{Blackwell-Monotone Updating Rules}
\author{Mark Whitmeyer\thanks{Arizona State University. Email: \href{mailto:mark.whitmeyer@gmail.com}{mark.whitmeyer@gmail.com}. Dedicated to KS. I thank the editor, Emir Kamenica, and four anonymous referees for their comments. Aislinn Bohren, Faruk Gul, Vasudha Jain, Ehud Lehrer, Benson Tsz Kin Leung, Paula Onuchic, Eddie Schlee, Ludvig Sinander, Shoshana Vasserman, Joseph Whitmeyer, Tom Wiseman, Kun Zhang and seminar and conference audiences at CETC 2023 (Simon Fraser), HKBU/Kyoto/NTU/Osaka/Sinica, UC Berkeley, UC Davis, and the University of Pennsylvania provided invaluable feedback. I am also grateful to Tong Wang, whose question inspired this paper. Nour Chalhoub and Konstantin von Beringe provided excellent research assistance. This paper was previously titled ``Bayes \(=\) Blackwell, Almost.''}}

\date{\today}

\maketitle

\begin{abstract}An updating rule specifies how an agent reacts to information. An updating rule is Blackwell monotone if more information is always better for an agent in a decision problem and strictly Blackwell monotone if, in addition, there is always a decision problem in which more information is strictly better for an agent. Bayes' law is strictly Blackwell monotone, and I show that within a broad class of updating rules--those that distort the Bayesian posteriors in a signal-independent manner--it is the only strictly Blackwell-monotone updating rule. If an agent's decisions are evaluated non-paternalistically (according to her beliefs), the Blackwell-monotone updating rules are affine distortions of the Bayesian posteriors. \end{abstract}

\newpage

\section{Introduction}

The Blackwell order over statistical experiments (\cite{blackwell} and \cite{blackwell2}) provides an elegant way of comparing information structures. To a Bayesian expected-utility maximizing agent, if one experiment ranks above another in the Blackwell order, it yields a higher welfare to the agent than the other, no matter the agent's decision problem. We term this property possessed by Bayesian updating--that Blackwell-more-informative experiments yield higher payoffs to an agent no matter her decision problem--\textit{Blackwell monotonicity}. If an agent violates Blackwell monotonicity, she cannot be a Bayesian expected-utility maximizer.

This paper asks the converse question: is it true that if an agent is not Bayesian, she must violate Blackwell monotonicity? There is an important subtlety to this query. For a Bayesian updater, there is no ambiguity about how the value of information and, hence, Blackwell monotonicity is defined. In contrast, it is not clear how the value of information should be defined for a non-Bayesian. At the interim point, when the agent is taking an action based on a non-Bayesian belief, should we compute her expected utility with the Bayesian posterior, or the one the agent herself is using? It may well be that the agent also makes forecasting errors about the likelihood of various pieces of information. From this \textit{ex ante} view, whose perspective should we take? 

Our two main results, Theorems \ref{maintheorem} and \ref{secondmaintheorem}, characterize Blackwell monotonicity when we evaluate an agent's interim behavior either according to the Bayesian belief (paternalistically) or according to the agent's belief (non-paternalistically). To distinguish between the two concepts, we term the latter \textit{non-paternalistic Blackwell monotonicity}.\footnote{We characterize Blackwell monotonicity when welfare is computed with an alternative forecasting rule in Appendix \ref{forecast}.}

When an agent's decisions are evaluated paternalistically, updating in a non-Bayesian manner may lead her to make decisions that are strictly suboptimal from the Bayesian's point of view. Mistaken beliefs may lead to mistaken actions, and these errors can be so costly that the agent is worse off from some information than if she had obtained no information. For example, one simple updating mistake is overreaction to information, in which an agent's beliefs depart too far from the prior than they should under Bayes' law. Even in simple decision problems with just two actions, overreaction can lead to some information being strictly worse than none, as the agent will be too eager to abandon the prior-optimal action. Thus, the agent's updating is not Blackwell monotone.

Another variety of mistake an agent might make is to underreact to information. Now, her posteriors are closer to the prior than they should be under Bayes' law. Surprisingly, for some varieties of underreaction, it is not the case that some information must be worse than none, even \textit{if we evaluate decisions according to the correct Bayesian posterior}. As \cite{morris1997rationality} and \cite{braghieri2023biased} reveal, a broad variety of updating rules are such that some information can never be worse than no information, even when we evaluate an agent's welfare paternalistically. Nevertheless, as we will see in \S\ref{ynot}'s example, an underreacting agent will nevertheless violate Blackwell monotonicity, as she may fail to take advantage of additional information.

When we evaluate an agent's choice of action non-paternalistically, it is clear that some information is always better than no information. At any posterior, if an agent picks an action other than one that is optimal at the prior, the new action must yield her a higher expected payoff at that belief than the prior-optimal ones. But, if we evaluate actions non-paternalistically, we agree with her! Consequently, some information is always better than none. On the other hand, as in the paternalistic case, non-paternalistic Blackwell monotonicity requires that the agent take advantage of additional information. As we will see, this is satisfied only by a particular variety of updating rules.

For the first half of this paper, we restrict attention to a special class of updating rules--those that systematically distort beliefs, i.e., for which an agent's posterior on observing an experiment realization is a distortion of the Bayesian posterior that is otherwise independent of the experiment itself. We also refine Blackwell monotonicity (both paternalistic and non-paternalistic) by introducing a ``strict'' modifier: an updating rule is strictly Blackwell monotone if it is not only Blackwell monotone but such that more information is strictly more valuable in some decision problem.

In the first of the two main results of this paper, Theorem \ref{maintheorem}, we show that in the class of updating rules that systematically distort beliefs, Bayes' law is the unique strictly Blackwell-monotone updating rule. That is, if an agent always prefers more information, and there is always a decision problem in which strictly more information can be exploited, she must be a Bayesian. In the second main result, Theorem \ref{secondmaintheorem}, we show that in the class of updating rules that systematically distort beliefs, the strictly non-paternalistic Blackwell-monotone updating rules are precisely those in which the distortion is a constant scaling of the Bayesian posterior plus a translation, \textit{viz.}, the distortion is affine. Although non-paternalistic Blackwell monotonicity demands this extreme sense of consistency in how the agent distorts Bayesian beliefs, we see that the value of information may still be positive for a non-Bayesian agent, \textit{but only from her perspective.}

We then push the limits of Theorem \ref{maintheorem} by moving beyond systematic distortions to find that there are two broader classes of updating rules in which Bayesianism is equivalent to strict Blackwell monotonicity. An updating rule is \textit{convex} if pooling experiments cannot increase their \textit{ex ante} value.\footnote{That is, an updating rule is convex if an agent always (weakly) prefers to observe experiment \(\pi\) with probability \(\lambda\) and \(\pi'\) with its complement--that is, randomize over two experiments--than observe their convex combination, experiment \(\lambda \pi + \left(1-\lambda\right) \pi'\), for sure.} Under Bayes’ law, a pooled experiment is worth exactly the probability‑weighted average of its components, so Bayes' law is convex--indeed, linear. An updating rule is \textit{grounded} if the absence of information (a completely uninformative experiment) is updated correctly. Proposition \ref{convexprop} reveals that a grounded, convex updating rule is strictly Blackwell monotone if and only if it is Bayes' law. \textit{Focused} updating rules generalize those that systematically distort beliefs: such rules are those for which identically-generated experiment realizations are interpreted identically. In Proposition \ref{focusprop}, we find that a grounded, focused updating rule is strictly Blackwell monotone if and only if it is Bayes' law. Each of these environments makes underreaction incompatible with Blackwell monotonicity, as the agent may fail to exploit additional information.

These various additional structures matter, in that groundedness alone does not pin down Bayes' law under strict Blackwell monotonicity. 
In \S\ref{nonbayesex}, we construct a strictly Blackwell-monotone and grounded rule other than Bayes' law. Importantly, however, by our equivalence results, it does not satisfy the other desiderata of convexity or focus. Thus, one perspective on these findings is that they provide an impossibility result: strict Blackwell-monotonicity and one of these other criteria cannot jointly be satisfied by any updating rule other than Bayes' law. Even within the class of systematic distortions (and even if we require that the distortion be continuous), the ``strict'' modifier is also important. In particular, when there are just two states, an updating rule that systematically distorts posteriors according to a continuous distortion is Blackwell monotone if and only if it corresponds to extreme-belief aversion (Appendix \ref{appx}).

These results are also useful in that they remove the need to check whether a particular specification of a (systematically distorting) non-Bayesian updating rule is Blackwell monotone. Unless there are two states and it is some form of extreme-belief aversion, any non-Bayesian updating rule that systematically distorts posteriors is such that there exist two Blackwell-ranked experiments with the property that the less-informative of the two is strictly better for the agent (evaluated paternalistically). Thus, case-by-case scrutiny of specific forms of non-Bayesian updating, like the conservative Bayesians of \cite{edwards}, confirmatory bias (\cite{rabinschrag}), divisible updating (\cite{divisibleupdating}), and \cite{grether}'s \(\alpha-\beta\) model, is unnecessary. None are Blackwell-monotone. It is also easy to check whether a rule is affine, making non-paternalistic Blackwell monotonicity easy to discern. Finally, these findings also motivate the construction of orders over experiments for non-Bayesians. How might we (paternalistically) rank experiments for an underreacter, for instance? How might underreacter rank experiments herself?

\subsection{Related Work}
By now, many papers explore non-Bayesian updating. One vein of the literature formulates axioms that produce updating rules other than Bayes' law. \cite{epstein06} studies an agent whose behavior is in the spirit of the prone-to-temptation agent of \cite{gulpes}. \cite{ortoleva} axiomatizes a model in which an agent does not behave as a perfect Bayesian when confronted with unexpected news.

Of special note is \cite{jakob}, who introduces a model of coarse Bayesian updating in which a decision-maker (DM) partitions the belief simplex into a collection of convex sets. \cite{jakob} presents an example (Example 4) of a coarse Bayesian updater who, nevertheless, assigns a higher value to more information. This is a particular case of extreme-belief aversion, the unique family of rules that are Blackwell monotone when there are two states (Appendix \ref{appx}). His Proposition 7 states precisely when a regular (for which all cells of the partition have full dimension) coarse Bayesian updating rule is Blackwell monotone.

There has also been a recent flurry of papers investigating various questions in this realm. \citet{molavi2021empirical} characterizes the empirical content of Bayesian updating. \cite{ba2024} explore how cognitive constraints shape belief updating, yielding underreaction in simpler environments and overreaction in complex ones. \cite{bohren2023} dichotomize biased updating into retrospective and prospective components; the former of which is central to this paper's analysis. \cite{chambers2024} propose a consistency criterion for belief distortions that singles out exponential distortions \`{a} la \cite{grether}. \cite{frick2024} study the long-run consequences of learning biases and construct both static and dynamic welfare rankings of updating rules. \cite{bordoli2024} also examines how non-Bayesian updating affects the value of information and describes the essential features of two natural versions of this concept.

This work is also related to the work on dynamically consistent beliefs--see, e.g., \cite{gul1990,machina92,border94,siniscalchi2011}; and the survey, \cite{machina89}. Another seminal paper in that area is \cite{epstein}, who show that ``dynamically consistent beliefs must be Bayesian,'' thereby establishing an equivalence (as Bayesian beliefs are obviously dynamically consistent). 
We find that within our specified classes, dynamic consistency and strict Blackwell monotonicity are also equivalent.

Closely tied to the notion of dynamic consistency is the value of information for DMs with non-expected-utility preferences. That some experiments may be harmful to a DM is illustrated in \cite{wakker}, \cite{hilton1990}, \cite{safra1995}, and \cite{hill2020}. \cite{li2016} show that the Blackwell order holds for almost all DMs with uncertainty-averse preferences provided they can commit \textit{ex ante} to actions, and \cite{celen} establishes that it holds for an MEU DM.

\subsection{Why Not Underreact?}\label{ynot}

When an agent's welfare is evaluated paternalistically, not all departures from Bayesian updating are equally harmful. There is a strong sense in which \textit{underreaction}, where a DM's posterior is a convex combination of the Bayesian posterior and the prior, is superior to \textit{overreaction}, where the Bayesian posterior is a convex combination of the DM's posterior and the prior. As \cite{morris1997rationality} and \cite{braghieri2023biased} show, a DM who underreacts can never be hurt by information: some information is always better than none.\footnote{\cite{morris1997rationality} and \cite{braghieri2023biased} provide a full characterization of updating rules that lead to a positive value for information (versus none). \cite{von2024perils} have a simpler and closely related thesis--underreacters can never be exploited by a malevolent principal whereas overreacters always can.} In stark contrast, for an agent who overreacts, there always exists a decision problem in which her losses from information (versus none) are unboundedly large.

It is easy to see why overreaction can be so harmful: the DM can be tricked into taking an extreme action that she should not take, and if incentives are steep enough, this mistake will be very costly. The innocuity of underreaction is more subtle: essentially, the martingality of the Bayesian posteriors means that a decision problem in which a DM makes mistakes at some beliefs must be one in which she benefits at others, and this benefit must outweigh the cost of any mistakes. But Blackwell-monotonicity requires something stronger: more information must always be preferred to less, not merely none.

We now illustrate these ideas in the context of a simple two-state example. The state space, \(\Theta\), is binary, \(\Theta = \left\{0,1\right\}\) and the DM's prior is \(\mu_0 = \mathbb{P}\left(1\right) = 1/2\). The DM has a binary decision problem with action set \(A = \left\{a_0,a_1\right\}\). Action \(a_0\) yields a state-independent payoff of \(0\) (the DM's utility is \(u(a_0,\theta) = 0\) for all \(\theta \in \Theta\)), whereas action \(a_1\) yields payoff \(-30\) in state \(0\) and \(20\) in state \(1\) (\(u(a_1,0) = -30\) and \(u(a_1,1) = 20\)). The DM is indifferent between the two actions when her belief \(\mu \equiv \mathbb{P}(1)\) equals \(3/5\).

Now take a binary experiment with realizations \(L\) and \(H\), where the Bayesian posterior upon observing \(L\) is \(0\) and the Bayesian posterior upon observing \(H\) is some \(\mu_H > 1/2\). If the DM overreacts, this signal can be worse for her than no information. In particular, suppose \(\mu_H < 3/5\), in which case a Bayesian DM will never take action \(a_1\), which yields her a payoff of \(0\), the same as her payoff to no information. However, if she overreacts and instead updates the high signal realization to some posterior that is strictly larger than \(3/5\), she will mistakenly take the high action, producing a strictly negative \textit{ex ante} expected payoff.

\begin{figure}
    \centering
    \includegraphics[width=0.6\linewidth]{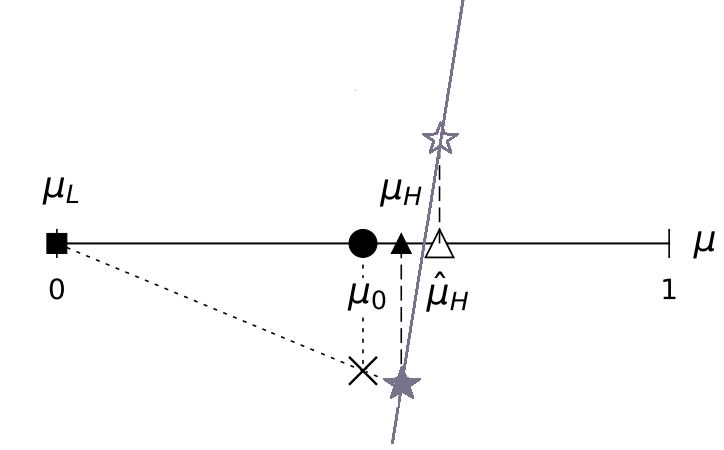}
    \caption{A Pernicious Overreaction}
    \label{fig1}
\end{figure}

This is depicted in Figure \ref{fig1}. The black line segment is the DM's payoff (in her belief \(\mu\)) to action \(a_0\). The gray line segment is her payoff to \(a_1\). The DM's prior (\(1/2\)) is the solid circle, and the low and high Bayesian posteriors, \(\mu_L = 0\) and \(\mu_H = 9/16\), are the solid square and triangle, respectively. Her expected payoff at belief \(\mu_H\) from taking action \(a_1\) is specified by the solid gray star and is strictly negative. On the other hand, the hollow triangle is her non-Bayesian posterior \(\hat{\mu}_H = 5/8\). Her expected payoff at this belief corresponds to the hollow gray star, which leads her to incorrectly take action \(a_1\) following realization \(H\), as the gray line segment lies above the black one at this belief.

With underreaction, the worst that can happen is a false negative: it may be that the Bayesian posterior following the high signal realization, \(\mu_H\), is strictly larger than \(3/5\), in which case the DM ought to take action \(a_1\). If, instead, the DM's posterior following the high realization is less than \(3/5\), she will persist with the \textit{status quo}, taking the prior-optimal action of \(a_0\). But this hurts her only in the sense that she is failing to take advantage of an opportunity--she is still no worse off than if she had had observed nothing.

This failure to take advantage of information is what makes underreaction violate Blackwell monotonicity. Suppose the DM underreacts in the sense that her updating rule produces posteriors that are \(50/50\) averages of the Bayesian posteriors with the prior. Take a signal that produces a distribution over Bayesian posteriors that is supported on \(\left\{0, 5/8, 7/8\right\}\) (with respective probabilities \(1/3\), \(1/3\), and \(1/3\)). The DM's updating produces posterior \(1/4\) instead of \(0\), \(9/16\) instead of \(5/8\), and \(11/16\) instead of \(7/8\). Consequently, the DM takes action \(a_1\) if and only if she observes the highest signal realization. She fails to fully take advantage of information because she should also take \(a_1\) following the intermediate signal realization but does not (as \(9/16 < 3/5 < 5/8\)).

The specified ternary distribution corresponds to a strictly more informative experiment than a binary one with support \(\left\{0,3/4\right\}\), yet yields a strictly lower \textit{ex ante} expected payoff to the DM. This is because the binary experiment does not produce any consequential mistakes: although the high realization in the binary is underreacted to, the DM's posterior (\(5/8\)) is still high enough for her to pick the correct action at the interim point.

\begin{figure}
    \centering
    \includegraphics[width=0.7\linewidth]{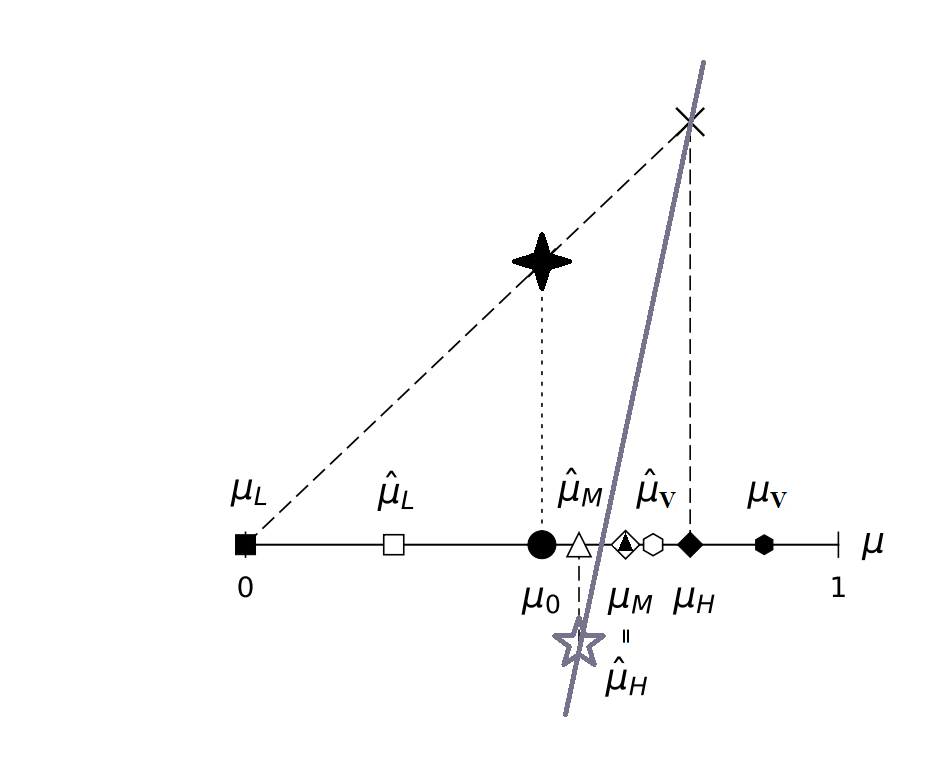}
    \caption{Underreaction Leads to Type-II Errors}
    \label{fig2}
\end{figure}

This is depicted in Figure \ref{fig2}. As before, the prior is the solid circle and the low posterior is the solid square \(\mu_L = 0\). The solid triangle is \(\mu_M = 5/8\), the solid diamond is \(\mu_H = 3/4\), and the solid hexagon is \(\mu_V = 7/8\). The underreacting \(DM\)'s posteriors are their hollow twins: \(\hat \mu_L\) is the hollow square, \(\hat{\mu}_M\) is the hollow triangle, \(\hat{\mu}_H\) (which equals \(\mu_M\)) is the hollow diamond, and \(\hat{\mu}_V\) is the hollow hexagon. The compass star is the Bayesian DM's \textit{ex ante} expected payoff from each experiment--the extra information is superfluous. The non-Bayesian DM, however, makes a mistake: she incorrectly believes that her expected payoff following realization \(M\) is the strictly-negative hollow gray star, which she wishes to avoid, so she will mistakenly choose \(a_0\).

\section{Setup} There is a finite set of states of nature, \(\Theta\), with \(\left|\Theta\right| = n \geq 2\). \(\Delta \equiv \Delta\left(\Theta\right)\) is the \(\left(n-1\right)\)-simplex, the set of probabilities on \(\Theta\), understood as a subset of \(\mathbb{R}^{n-1}\), topologized by the Euclidean metric. \(\mu_0 \in \Delta^{\circ}\) denotes our decision-maker's (DM's) full-support prior, where \(\Delta^{\circ}\) denotes the (topological) interior of \(\Delta\). For any belief \(\mu \in \Delta\), we write \(\mu\left(\theta\right) \equiv \mathbb{P}_\mu\left(\theta\right)\). 

A (statistical) experiment is a pair \(\left(\pi,S\right)\), consisting of a stochastic map \(\pi \colon \Theta \to \Delta \left(S\right)\) and a finite set of signal realizations \(S\) for which i. the unconditional probability of each \(s \in S\) is strictly positive; and ii. the Bayesian posterior after observing each \(s\), \(\mu_s\), is unique (for \(s,s' \in S\), if \(s \neq s'\), \(\mu_s \neq \mu_{s'}\)).\footnote{Neither assumption affects the results of this paper and both help simplify the analysis.} We note the formula for the Bayesian posterior: for all \(\theta \in \Theta\), \[\mu_s\left(\theta; \pi\right) = \frac{\mu_0\left(\theta\right) \pi\left(\left.s\right|\theta\right)}{\sum_{\theta' \in \Theta} \mu_0\left(\theta'\right) \pi\left(\left.s\right|\theta'\right)}\text{.}\]
\(\Pi\) denotes the set of experiments. In turn, from the \textit{ex ante} perspective a statistical experiment induces a distribution over Bayesian posteriors \(\rho \in \Delta \Delta\left(\Theta\right)\), where \(\mathbb{P}_{\rho}\left(\mu_s\right) = \rho(\mu_s) = \sum_{\theta' \in \Theta} \mu_0\left(\theta'\right) \pi\left(\left.s\right|\theta'\right)\).

We are interested in studying departures from Bayes' law. To that end, we denote the DM's posterior upon observing signal realization \(s\), given experiment \(\left(\pi,S\right)\), by \(\hat{\mu}_s\left(\theta; \pi\right)\).\footnote{Throughout, beliefs with hats will be those produced by the (potentially non-Bayesian) updating rule, and those without hats will be those produced by Bayes' law.} An \textcolor{OrangeRed}{Updating Rule}, \(U\), is a map
\[\begin{split}
   U \colon \Pi &\to \Delta^S\\
    \left(\pi,S\right) &\mapsto \left(\hat{\mu}_s\left(\theta; \pi\right)\right)_{s \in S}\text{,}
\end{split}\]
i.e., it maps a experiment to a collection of posteriors, one for each signal realization.

Following \cite{declippel}, we say that an updating rule \textcolor{OrangeRed}{Systematically Distorts Beliefs} if there exists a function \(\varphi \colon \Delta \to \Delta\) such that for all statistical experiments \(\left(\pi,S\right) \in \Pi\), and signal realizations \(s \in S\), 
\[\hat{\mu}_s\left(\theta; \pi\right) = \varphi\left(\mu_s\left(\theta; \pi\right)\right)\text{.}\] 
For most of the paper, we restrict attention to updating rules that systematically distort beliefs, which from now on we term \textcolor{OrangeRed}{Updating Rules}. This allows us to drop subscripts and arguments: given distortion \(\varphi\) and  a Bayesian posterior \(\mu\), the DM's posterior is \(\hat{\mu} \equiv \varphi(\mu)\).

A \textcolor{OrangeRed}{Decision Problem} \(\left(A, u\right)\) consists of a compact set of actions \(A\) and a continuous utility function \(u \colon A \times \Theta \to \mathbb{R}\). For a Bayesian DM, any decision problem induces a value function in the posterior \(\mu\) \[V\left(\mu\right) \equiv \max_{a \in A}
\mathbb{E}_{\mu}u\left(a,\theta\right) \text{ .}\]
\(V\) is convex, which implies a positive value of information for a Bayesian DM. In this paper, we are interested in evaluating the value of information for non-Bayesians. For simplicity, we specify that at every \(\hat{\mu} \in \Delta\) the DM's choice of action is \textcolor{OrangeRed}{Consistent}: her choice depends only on the realized posterior, i.e., selection \(a^*\left(\hat{\mu}\right) \in \argmax_{a \in A}\mathbb{E}_{\hat{\mu}} u\left(a,\theta\right)\) is a function of \(\hat{\mu}\).\footnote{Naturally, we are also assuming \textit{interim} optimality: \(a^*\) is a selection from the \(\argmax\) correspondence at belief \(\hat{\mu}\).}

For a fixed decision problem and consistent choice of action, the function \(W\left(\mu\right) \coloneqq \mathbb{E}_{\mu}u\left(a^*\left(\varphi\left(\mu\right)\right),\theta\right)\) is a well-defined function of the Bayesian posterior \(\mu\). This is a paternalistic way of evaluating an agent's welfare: we are evaluating her payoff not with the belief she uses to take her decision, \(\varphi(\mu)\), but with the Bayesian belief \(\mu\). In contrast, \(V(\varphi(\mu))\) is the agent's welfare when evaluated non-paternalistically. Letting \(\rho\) be the Bayesian distribution over posteriors corresponding to experiment \(\pi\), the \textit{ex ante} expected utility for the DM from observing \(\pi\) is \(\mathbb{E}_\rho W\left(\mu\right)\) if evaluated paternalistically, and \(\mathbb{E}_\rho V\left(\varphi\left(\mu\right)\right)\) if evaluated non-paternalistically. 

We can now state the central definitions of the paper, where \(\rho'\) is the Bayesian distribution over posteriors corresponding to \(\pi'\).
\begin{definition}
    An updating rule is \textcolor{OrangeRed}{Blackwell Monotone} if for any decision problem \(\left(A, u\right)\), consistent \(a^* \colon \Delta \to A\), and pair \(\pi \succeq \pi'\), \[\mathbb{E}_{\rho} W\left(\mu\right) \geq \mathbb{E}_{\rho'} W\left(\mu\right)\text{;}\]
    and \textcolor{OrangeRed}{Non-paternalistically Blackwell Monotone} if for any decision problem \(\left(A, u\right)\), consistent \(a^* \colon \Delta \to A\), and pair \(\pi \succeq \pi'\), \[\mathbb{E}_{\rho} V\left(\varphi\left(\mu\right)\right) \geq \mathbb{E}_{\rho'} V\left(\varphi\left(\mu\right)\right)\text{.}\]
\end{definition}
Blackwell monotonicity is equivalent to \(W\)'s convexity in \(\mu\), whereas non-paternalistic Blackwell monotonicity is equivalent to \(V \circ \varphi's\) convexity in \(\mu\).
\begin{definition}
    An updating rule is \textcolor{OrangeRed}{Strictly Blackwell Monotone} if it is  Blackwell monotone and for any pair \(\pi \succ \pi'\) there exists a decision problem and a consistent decision rule such that
\[\mathbb{E}_{\rho} W\left(\mu\right) > \mathbb{E}_{\rho'} W\left(\mu\right)\text{;}\]
and \textcolor{OrangeRed}{Strictly Non-paternalistically Blackwell Monotone} if it is non-paternalistically Blackwell monotone and for any pair \(\pi \succ \pi'\) there exists a decision problem and a consistent decision rule such that
\[\mathbb{E}_{\rho} V\left(\varphi\left(\mu\right)\right) >\mathbb{E}_{\rho'} V\left(\varphi\left(\mu\right)\right)\text{.}\]
\end{definition}
A third way of evaluating a non-Bayesian DM's value of information is to allow for the forecast distribution over signal realizations to also be incorrect--some Bayes-plausible \(\hat{\rho}\) instead of \(\rho\)--and so we evaluate \(\mathbb{E}_{\hat{\rho}} V\left(\hat{\mu}\right)\). We provide a full characterization of Blackwell monotonicity for this conception of it in Appendix \ref{forecast}.

\subsection{Why Systematic Distortions?}

The primary focus of this paper is on updating rules that systematically distort beliefs. There are several justifications for this. First, systematic distortions are those for which counterfactual events do not affect decision-making either directly or indirectly: only the state-dependent probabilities under the experiment of the signal realization matter for the DM's posterior, and the agent can simplify fractions, which renders the probabilities of the other signal realizations totally irrelevant \textit{ex interim}. This is consistent with our method of evaluating the value of information in that the DM encounters the signal realization and takes an interim-optimal action at her resulting posterior. Why should counterfactual events matter in her calculus?

Second, the bulk of the existing literature on updating rules focuses on those that systematically distort beliefs. Thus, our results concern a particularly relevant and useful class of rules. A third argument in favor of our emphasis on systematic distortions is that the subsequent results concerning focused (and grounded) rules and convex (and grounded) rules also single out Bayes' law as equivalent to strict Blackwell monotonicity. In this light, the systematic-distortion specification can be seen as a slightly simplified environment that is extremely tractable and yet whose properties extend more broadly.

\section{Blackwell-Monotone Updating Rules}

In this section, we state and discuss the two main results of the paper, which characterize the updating rules that are strictly Blackwell monotone and strictly non-paternalistically Blackwell monotone.  It is important to keep in mind that in this section, by updating rule, we mean “updating rule that systematically distorts beliefs,” though we also remind ourselves of this in the statements of the two main theorems.
\begin{restatable}{theorem}{maintheorem}
   \label{maintheorem}
    An updating rule that systematically distorts beliefs is strictly Blackwell monotone if and only if it is Bayes' law.
\end{restatable}
That is, an updating rule is strictly Blackwell monotone if and only if \(\varphi(\mu) = \mu\) for all \(\mu \in \Delta\). The theorem follows from a number of smaller results, the proofs to which may be found in Appendix \ref{maintheoremproof}. We begin by arguing that if an updating rule is Blackwell monotone, the distortion \(\varphi\) must be continuous on \(\Delta^{\circ}\).
\begin{restatable}{lemma}{continuitylemma}
   \label{continuitylemma}
    If updating rule \(U\) is Blackwell monotone, \(\varphi\) is continuous on \(\Delta^{\circ}\). 
\end{restatable}
The intuition behind this lemma is straightforward. \(U\) being Blackwell monotone is equivalent to the induced value function \(W\)'s convexity in the Bayesian posterior \(W\). But convex functions are continuous except on the boundaries of their domain. Thus, if \(\varphi\) ``jumps'' somewhere, we can find a decision problem in which the ``jump'' is inherited by \(W\), rendering it non-convex.

We then build upon this lemma to discipline the behavior of \(\varphi\) for all full-support beliefs. We say that a distortion \(\varphi\) is \textcolor{OrangeRed}{Trivial} on a set \(Y \subseteq \Delta\) if \(\varphi(\mu) = \bar{\mu}\) for some \(\bar{\mu} \in \Delta\) for all \(\mu \in Y\). \(\varphi\) is the \textcolor{OrangeRed}{Identity} map on \(Y \subseteq \Delta\) if \(\varphi(\mu) = \mu\) for all \(\mu \in Y\). We also highlight one particular departure from Bayesian updating:
\begin{definition}
    Let there be two states (\(n = 2\)). A DM with updating rule \(U\) displays \textcolor{OrangeRed}{Extreme-Belief Aversion} if there exist two intervals \(C_1 \coloneqq (0,c)\) and \(C_2 \coloneqq (d,1)\) (with \(c \leq d\)) such that \(\varphi(\mu) = c\) for all \(\mu \in C_1\),  \(\varphi(\mu) = d\) for all \(\mu \in C_2\)  and \(\varphi(\mu) = \mu\) for all \(\mu \in \left[c,d\right]\).
\end{definition}
We discuss this pattern of behavior in Appendix \ref{appx}, where we specialize to the two-state environment.
\begin{restatable}{proposition}{propxba}
    \label{propxba}
\begin{enumerate}[label={(\roman*)},noitemsep,topsep=0pt]
    \item Let there be two states (\(n=2\)). If \(U\) is Blackwell monotone, the DM displays extreme-belief aversion.
    \item Let there be three or more states (\(n \geq 3\)). If \(U\) is Blackwell monotone, \(\varphi\) is either trivial or the identity map on \(\Delta^{\circ}\).
\end{enumerate}
\end{restatable}
Lemma \ref{continuitylemma} is key to establishing this proposition, whose proof is purely topological. Thanks to the lemma, we know that Blackwell monotonicity implies continuity of the distortion \(\varphi\) on the interior of the simplex. Using this continuity, we argue that if \(\varphi\) is not the identity map on \(\Delta^{\circ}\) it must be locally trivial at some belief \(\mu \in \Delta^{\circ}\) for which \(\varphi(\mu) \neq \mu\). That is, every belief in an open ball around that \(\mu\) must be mapped to the same point. This local triviality plus the facts that \(\varphi\) is continuous and \(\Delta^{\circ}\) is connected allow us to conclude the global structure stated in the proposition.\footnote{I am grateful to the anonymous reviewer who suggested this approach.} Moreover, if we add the additional requirement that \(\varphi\) is continuous on \(\Delta\), we arrive at the following stark conclusion when there are three or more states.
\begin{corollary}\label{nontrivial}
    Let there be three or more states (\(n \geq 3\)). If \(U\) is Blackwell monotone and \(\varphi\) is continuous on \(\Delta\), \(\varphi\) is either trivial or the identity map on \(\Delta\).
\end{corollary}
But \(U\) cannot be Blackwell monotone and such that the DM gets every full-support posterior ``right,'' yet gets some non-full-support posterior ``wrong:''
\begin{restatable}{lemma}{identityeverywhere}
    \label{identityeverywhere}
    If \(U\) is Blackwell monotone and \(\varphi\) is the identity map on \(\Delta^{\circ}\), \(\varphi\) is the identity map on \(\Delta\).
\end{restatable}
The reasoning behind this lemma is simple. If \(\varphi\) distorts some belief, \(x\), on the boundary of \(\Delta\) but none on the interior, we can take a binary experiment \(\pi\) whose induced Bayesian distribution over posteriors has one of its two support points on that \(x\). But then we just take an ever-so-slightly less informative experiment, \(\pi'\), in which \(x\) is replaced with \(x'\) that is on the line segment between \(x\) and the prior. No matter how close it is to \(x\), \(x'\) lies, nevertheless, in the interior of \(\Delta\), and so \(\varphi(x') = x'\). Consequently, we can always find a decision problem in which \(x\) leads to a harmful mistake but \(x'\) does not, or a decision problem in which the DM fails to take advantage of her information at \(x\) but does at \(x'\).

On the other hand, if \(\varphi\) is trivial on some open ball in \(\Delta^{\circ}\), the updating rule cannot be strictly Blackwell monotone:
\begin{restatable}{lemma}{fullcollapse}
    \label{lemma37}
    If there exists \(\mu' \in \Delta^{\circ}\) and \(\varepsilon > 0\) such that \(\varphi(\mu) = \bar{\mu} \in \Delta\) for all \(\mu \in B_{\varepsilon}(\mu')\), \(U\) is not strictly Blackwell monotone.
\end{restatable}
These results combine to yield the theorem. Even without the additional strictness refinement, if we impose that \(\varphi\) is continuous and nontrivial, Blackwell monotonicity already singles out Bayes' law when there are more than three states (Corollary \ref{nontrivial}). Moreover, the continuity assumption only has bite for extreme posteriors--the distortion is necessarily continuous for full-support posteriors. And the two-state environment? We show in Appendix \ref{appx} that a continuous updating rule is Blackwell monotone if and only if the DM displays extreme-belief aversion.

Before discussing further the intuition behind Theorem \ref{maintheorem}, we turn our attention to updating rules that yield a positive value of information from the agent's perspective. We say that distortion \(\varphi \colon \Delta \to \Delta\) is affine if \(\varphi\left(\mu\right) = A \mu + b\) for some invertible matrix \(A\) and vector \(b\). Then,
\begin{restatable}{theorem}{secondmaintheorem}
   \label{secondmaintheorem}
    An updating rule that systematically distorts beliefs is strictly non-paternalistically Blackwell monotone if and only if the distortion is affine.
\end{restatable}
Unlike the paternalistic computation of the DM's welfare, where we use the Bayesian posterior, we evaluate her non-paternalistic welfare according to the actual posteriors the DM uses to make her decisions. This means that an updating rule is non-paternalistically Blackwell monotone if the DM's distortion is convexity-preserving: \(V \circ \varphi\) must be convex for any convex function \(V\).

We show first that non-paternalistic Blackwell-monotonicity is equivalent to the distortion taking the form \(\varphi\left(\mu\right) = A \mu + b\) for some \(\left(n-1\right) \times \left(n-1\right)\) matrix \(A\) and \(b \in \mathbb{R}^{n-1}\). That is, the distortion consists of a belief-independent scaling and translation of the posterior. Only such persistent biases preserve the convexity of any \(V\) and, hence, a positive value of information. Strict non-paternalistic Blackwell-monotonicity also requires that \(\varphi\) be injective, which is equivalent to matrix \(A\) being of full rank. This is intuitive: if two Bayesian beliefs were distorted to the same point, so too must any convex combination of the two beliefs. But then there are informational improvements that never have strictly positive value; and, consequently, the updating rule is not strictly non-paternalistically Blackwell monotone. We defer the proof of the theorem to Appendix \ref{secondmaintheoremproof}.

\subsection{Sketching a Special Setting}\label{sketchysec}

To gain intuition for Theorem \ref{maintheorem}, let us sketch an alternative proof for a restricted class of updating rules. We say a distortion, \(\varphi\), is \textcolor{OrangeRed}{Grounded} if \(\varphi(\mu_0) = \mu_0\). 

The result to be argued is
\begin{corollary}\label{updatecorr}
    An updating rule whose distortion is grounded is strictly Blackwell monotone if and only if it is Bayes' law.
\end{corollary}
Naturally, we need only establish the necessity of Bayes' law. There are three key ingredients for this result:

\bigskip

\noindent \textbf{1. The Necessity of Underreaction.} From discussion in \cite{braghieri2023biased} and as illustrated in \S\ref{ynot}, if \(\varphi\) is grounded, \(U\) is Blackwell monotone only if \(\varphi(\mu) \in \conv\left\{\mu_0,\mu\right\}\) for any \(\mu \in \Delta\).\footnote{\(\conv X\) denotes the closed convex hull of set \(X\).} Just as is depicted in Figure \ref{fig1}, the existence of any other error means that there is a (binary) decision problem where the DM is strictly worse off from some information than if she had obtained no information.

\bigskip

\noindent \textbf{2. The Necessity of Monotonicity.} Our previous observation allows us to restrict attention without loss of generality to the two-state environment, which we now do. Next, we deduce that if \(\varphi\) is grounded, a Blackwell-monotone \(U\) must distort posteriors in a monotone way: if \(\mu_L < \mu_{L}' < \mu_0\), then \(\hat{\mu}_L \leq \hat{\mu}_{L}' \leq \mu_0\). Indeed, suppose for the sake of contradiction not, i.e., that \(\hat{\mu}_{L}' < \hat{\mu}_L\). As the DM is underreacting, we must have \(\mu_{L}' \leq \hat{\mu}_{L}' < \hat{\mu}_L\), which means that \(\conv\left\{\mu_{L}', \hat{\mu}_{L}'\right\}\) can be strictly separated from \(\hat{\mu}_L\) by some hyperplane. Consequently, there is a decision problem in which the DM takes fails to take advantage of information only with the more informative of two experiments, violating Blackwell monotonicity.

\bigskip

\noindent \textbf{3. The Necessity of Triviality.} Finally, if \(\varphi\) is grounded, a Blackwell-monotone \(U\) in which there is underreaction must be such that an interval of (Bayesian) beliefs must all be mapped to the same belief by \(\varphi\). The logic behind this is precisely that given in \S\ref{ynot}: if we did not have this constancy of \(\varphi\) on an interval, we could strictly improve the DM's payoff by moving from a more informative ternary distribution over posteriors to a binary one, as depicted in Figure \ref{fig2}.

But this local triviality of \(\varphi\) means that certain local increases in information can never be exploited, violating strict Blackwell monotonicity. We conclude that if \(\varphi\) is grounded, a strictly Blackwell-monotone \(U\) cannot produce any errors. It must be Bayes' law.

\section{Non-Systematic Distortions}\label{nonsys}

Now let us remove the requirement that the updating rule systematically distorts beliefs. That is, given experiment \(\left(\pi,S\right) \in \Pi\), and upon observing realization \(s \in S\), the DM's posterior is 
\[\hat{\mu}_s\left(\theta; \pi\right) = \varphi_{\pi}\left(\mu_s\left(\theta; \pi\right)\right)\text{,}\] 
i.e., the distortion of the Bayesian posterior depends on what the experiment was itself.\footnote{We do not permit the distortion to depend on the \textit{realization} \(s\). This is without loss of generality as we have assumed that each realization induces a unique Bayesian posterior. Allowing for duplicate Bayesian beliefs and realization-contingent distortions would not affect our results.}  We maintain the standing consistency requirement on the DM's interim choice of action. Then, for a fixed statistical experiment \((\pi,S)\), decision problem, and consistent choice of action, the function \(W_\pi\left(\mu\right) \coloneqq \mathbb{E}_{\mu}u\left(a^*\left(\varphi_\pi\left(\mu\right)\right),\theta\right)\) is a well-defined function of the Bayesian posterior \(\mu\) on \(\supp \rho\).

In this section, we consider three additional classes of updating rules, which we term grounded, convex, and focused. For any pair of experiments \(\left(\pi, S\right)\) and \(\left(\pi',S'\right)\), a \textcolor{OrangeRed}{Convex Combination} of the two experiments is, for some \(\lambda \in \left[0,1\right]\), experiment \(\left(\pi_{\lambda}, S \cup S'\right)\), where \(\pi_{\lambda}\left(\left.s\right|\theta\right) = \lambda \pi \left(\left.s\right|\theta\right)\) for all \(s \in S\), for all \(\theta \in \Theta\) and \(\pi_{\lambda}\left(\left.s'\right|\theta\right) = (1-\lambda) \pi \left(\left.s'\right|\theta\right)\) for all \(s' \in S'\), for all \(\theta \in \Theta\). Our definition of grounded updating rules is a simple modification of the previous term concerning distortions. \textit{Viz.}, we say an updating rule is \textcolor{OrangeRed}{Grounded} if the DM's posterior upon observing the trivial experiment with a single signal realization is the prior. We say an updating rule is \textcolor{OrangeRed}{Focused} if for any pair of experiments \(\left(\pi, S\right)\) and \(\left(\pi',S'\right)\), if \(\pi(\left.s\right|\theta) = \pi'(\left.s'\right|\theta)\) for all \(\theta \in \Theta\) for some pair \(s \in S\), \(s' \in S'\), then \(\hat{\mu}_s\left(\theta; \pi\right) = \hat{\mu}_{s'}\left(\theta; \pi'\right)\).
\begin{definition}
    An Updating Rule is \textcolor{OrangeRed}{Convex} if for any triple of experiments \(\left(\pi, S\right)\), \(\left(\pi',S'\right)\), and \(\left(\pi_{\lambda}, S \cup S'\right)\), where \(\left(\pi_{\lambda}, S \cup S'\right)\) is a convex combination of the first two experiments, and consistent choice of action \(a^* \colon \Delta \to A\),
\[\lambda \mathbb{E}_{\rho} W_\pi(\mu) + (1-\lambda) \mathbb{E}_{\rho'} W_{\pi'}(\mu) \geq \mathbb{E}_{\rho_{\lambda}} W_{\pi_{\lambda}}(\mu)\text{,}\]
where \(\rho\), \(\rho'\) and \(\rho_{\lambda} = \lambda \rho + (1-\lambda) \rho'\) are the corresponding distributions over Bayesian posteriors (respectively).
\end{definition}
Any updating rule that systematically distorts beliefs is both convex and focused, and so most updating rules studied in the literature satisfy these properties. An updating rule that does not systematically distort beliefs, yet is convex, is ``no learning without full disclosure,'' where for any experiment other than the fully informative one, each realization is updated to the prior, and the DM updates the fully-informative experiment according to Bayes law. In fact, this updating rule is both Blackwell monotone and grounded, but it is not strictly Blackwell monotone.

Focused updating rules possess a weaker form of column-independence than those that systematically distort beliefs. The latter class are those for which the DM must be able to understand when two ratios are equivalent, irrespective of the precise numerators and denominators of the fractions. \textit{Viz.}, a systematic distorter understands that \(1/2\) and \(9/18\) are equivalent. Focused updating rules are those for which the DM must be able to understand when two \textit{identical fractions} are the same--i.e., that \(1\) equals \(1\), \(2 = 2\), \(9=9\) and \(18=18\)--but may not recognize that \(1/2 = 9/18\). Accordingly focused updating rules are a much broader class, and assume a much lower baseline of cognition.\footnote{To be fair, it \textit{is} more difficult to recognize that \(.28/.95\) and \(7/19\) are the same (are they?) than \(7/19\) and \(7/19\).} For such updaters, focus merely imposes that upon seeing the exact same thing \textit{ex interim}, a DM's beliefs must be the same.

The primary motivation behind each of these properties is normative. Groundedness and focus are both forms of consistency. If the agent learns nothing and it is obvious that she has learned nothing--the experiment is completely uninformative, with just one signal realization--her belief should not change. If an agent observes signal realization \(s\) and is told the conditional probabilities \(\left(\pi\left(\left.s\right|\theta\right)\right)_{\theta \in \Theta}\), she should not hold different beliefs depending on what the probabilities of counterfactual signal realizations are. Convexity corresponds to the notion that simpler experiments ought to lead to fewer and milder mistakes than more complicated experiments, and a convex combination of experiments is more complex than each component experiment in a natural sense.

\subsection{A Strictly Monotone Updating Rule Other Than Bayes' Law}\label{nonbayesex}

Absent the assumption of systematic distortions, groundedness, by itself, is not enough to single out Bayes' law as the unique Blackwell-monotone updating rule, nor even the unique strictly Blackwell-monotone rule. Let us construct an alternative one. Let \(\Theta = \left\{0,1\right\}\). For any experiment \(\left(\pi,S\right)\), with corresponding \(\rho\), define
\[\sigma(\pi) \coloneqq \mathbb{E}_\rho\left[\frac{\left(\mu-\mu_0\right)^{2}}{\mu_0\left(1-\mu_0\right)}\right]\text{,}\]
\[\ubar{\mu}_\pi \coloneqq \mu_0 (1- \sigma(\pi)), \quad \text{and} \quad \bar{\mu}_\pi \coloneqq \mu_0 + \sigma(\pi) (1-\mu_0)\text{.}\]
\(\sigma \colon \Pi \to \left[0,1\right]\) is strictly monotone in the Blackwell order: if \(\pi \succ \pi'\), \(\sigma\left(\pi\right) > \sigma\left(\pi'\right)\). \(\sigma(\pi)\) also ranges from \(0\) (a fully uninformative experiment) to \(1\) (full information).

We define
\[\varphi_\pi(\mu) = \begin{cases}
    \ubar{\mu}_\pi, \quad &\text{if} \quad \mu \leq \ubar{\mu}_\pi\\
    \mu, \quad &\text{if} \quad \ubar{\mu}_\pi < \mu < \bar{\mu}_\pi\\
    \bar{\mu}_\pi, \quad &\text{if} \quad \mu \geq \bar{\mu}_\pi\text{.}
\end{cases}\]
That is, \(\varphi_\pi\) corresponds to experiment-dependent extreme-belief aversion, where the degree to which the DM eschews obtaining extreme beliefs is strictly decreasing in the informativeness of the experiment. Observe that this updating rule is grounded. Moreover,
\begin{remark}\label{remarkremark}
    This updating rule is strictly Blackwell monotone.
\end{remark}

Please visit Appendix \ref{remarkremarkproof} for a proof of this remark. This rule alters the standard extreme-belief-aversion rule, which is Blackwell monotone (proved in Appendix \ref{appx})--but not strictly Blackwell monotone--to make it strictly so via the experiment-dependent weight \(\sigma\left(\cdot\right)\). But can there be strictly Blackwell-monotone rules other than Bayes' law when there are three or more states?

Yes, there can: for example, suppose that for each line that goes through \(\mu_0\) the DM updates any experiment whose Bayesian posteriors all lie on that line in the manner defined in Remark \ref{remarkremark}, treating the intersection of the simplex and the specified line as the two-state \(1\)-simplex. Any experiment whose Bayesian posteriors are not all collinear is updated according to Bayes' law. This is a strictly Blackwell-monotone (and grounded) updating rule.

\subsection{Convex Updating}

Nevertheless, if we require that the updating rule be not only grounded but convex, Bayes' law is the only strictly Blackwell-monotone updating rule.

\begin{restatable}{proposition}{convexprop}
    \label{convexprop}
    A grounded, convex updating rule is strictly Blackwell monotone if and only if it is Bayes' law.
\end{restatable}
Only the necessity direction needs proving, as Bayes' law is grounded, strictly Blackwell monotone, and convex. We defer the detailed proof to Appendix \ref{convexpropproof} and discuss a sketch here.

Our approach mimics that for Corollary \ref{updatecorr}. We begin by showing that for binary experiments, convexity and Blackwell monotonicity imply Bayesian updating. First, via the exact same logic as in the proof of Corollary \ref{updatecorr}, we argue that Blackwell-monotonicity implies that any error must be an \textbf{underreaction} to information. Otherwise, just as in \S\ref{ynot}, we could find a decision problem in which the DM strictly prefers no information to some. Second, if it is Blackwell monotone, the updating rule must also distort posteriors in a \textbf{monotone} manner: for two binary experiments \(\pi\) and \(\pi'\), whose Bayesian posteriors are \(\left\{\mu_{L},\mu_H\right\}\) and \(\left\{\mu_{L}',\mu_H\right\}\) (\(\mu_L < \mu_L' < \mu_0 < \mu_H\)), it must be that \(\varphi_\pi(\mu_L) \leq \varphi_{\pi'}(\mu_{L}')\).

Third, we show that Blackwell-monotonicity and \textit{convexity} imply that the updating rule must exhibit a form of \textbf{triviality}. If the updating rule is not Bayes' law for some binary experiment but Blackwell monotone and convex, there necessarily exists a binary experiment \(\pi_1\), whose Bayesian posteriors are \(\left\{\mu_{l}',\mu_h\right\}\), and a ternary experiment \(\pi_2\), whose Bayesian posteriors are \(\left\{\mu_{l},\mu_m, \mu_h\right\}\) (\(\mu_l < \mu_l' < \mu_m < \mu_0 < \mu_h\)) such that \(\pi_2\) is more informative than \(\pi_1\), \(\varphi_{\pi_2}(\mu_l) = \varphi_{\pi_1}(\mu_l') = \varphi_{\pi_2}(\mu_m)\), and \(\varphi_{\pi_1}(\mu_h) = \varphi_{\pi_2}(\mu_h)\). This step is quite involved as it requires constructing a number of auxiliary binary experiments, two of which are such that \(\pi_2\) is a convex combination of them, which is where the convexity of the updating rule comes into play. Namely, convexity allows us to conclude that \(\varphi_{\pi_2}(\mu_l) = \varphi_{\pi_2}(\mu_m)\)--and so both equal \(\varphi_{\pi_1}(\mu_l')\), which must be sandwiched between them due to monotonicity--and \(\varphi_{\pi_1}(\mu_h) = \varphi_{\pi_2}(\mu_h)\). This triviality is incompatible with strict Blackwell monotonicity, so we conclude that the only convex and strictly Blackwell-monotone updating rule for binary experiments is Bayes' law.

It remains only to deduce the necessity of Bayes' law for general, not-necessarily-binary experiments. But this follows easily from the binary-experiment result: for an arbitrary experiment in which some realization produces an error, we can find a (less-informative) binary experiment and a decision problem that would yield the same payoff to a Bayesian under each experiment, but for which the updating error leads to a strict payoff decrease. This contradicts Blackwell monotonicity.

\subsection{Focused Updating}

Adding focus to groundedness is also enough to single out Bayes' law as the only strictly Blackwell-monotone updating rule.

\begin{restatable}{proposition}{focusprop}
    \label{focusprop}
    A grounded, focused updating rule is strictly Blackwell monotone if and only if it is Bayes' law.
\end{restatable}
The proof of this proposition is extremely similar to that of Proposition \ref{convexprop}. In particular, that Blackwell monotonicity implies that any error must be an underreaction and that the updating rule must distort posteriors produced by binary experiments in a monotone way hold through the exact same logic. Likewise, the final step--going from the binary-experiment conclusion to arbitrary experiments is identical as well. The lone difference is in establishing that the same variety of local triviality must also manifest. 

In this step, now focus plays a role. In particular, like with the convexity result, we use a collection of carefully-constructed auxiliary binary experiments to show that underreaction to some binary experiment, focus, and Blackwell monotonicity necessitate the existence of a ternary experiment and a less informative experiment that render the same value of information to the DM no matter the decision problem, violating strict Blackwell monotonicity. To elaborate, in deriving the convex-updating rule result, the relevant ternary experiment is a convex combination of two of the auxiliary experiments. Here, we go from an auxiliary ternary experiment to a binary experiment by combining two of the realizations--which strictly decreases the experiment's informativeness--but keeping the third unchanged. By focus, that holdover realization must produce the same belief in each of the two experiments, which is enough (with a little more effort) to generate the desired local triviality that is incompatible with strict Blackwell monotonicity. For the details, please visit Appendix \ref{focuspropproof}.

\appendix

\section{Omitted Proofs}\label{theappendix}

\subsection{Theorem \ref{maintheorem} Proof}\label{maintheoremproof}
\maintheorem*
We prove this through a sequence of smaller results.

\continuitylemma*
\begin{proof}
    To that end, we prove the contrapositive: if \(\varphi\) is not continuous on \(\Delta^{\circ}\), there is a decision problem in which the induced \(W\) is not convex on \(\Delta^{\circ}\). By Theorem 10.1 in \cite{rockafellar1970convex} it suffices to show that \(W\) is not continuous on \(\Delta^{\circ}\). 
    
    We enumerate the states \(\Theta = \left\{0, 1, \dots, n-1\right\}\) and for \(\mu \in \Delta\) let \(\mu^i\) denote \(\mathbb{P}\left(i\right)\) (\(i = 1, \dots, n-1\)). Suppose \(\varphi\) is not continuous at some \(\mu \in \Delta^{\circ}\). That is, there exists some sequence \(\left\{\mu_n\right\} \subseteq \Delta^{\circ}\) such that \(\mu_n \to \mu\) but \(\varphi(\mu_n) \not\to \varphi(\mu)\). Without loss of generality (as we could just relabel) we assume that \(\varphi^1(\mu_n) \not\to \varphi^1(\mu)\).
    
    Now, let us construct a decision problem as follows. The set of actions is \(A \equiv \left[0,1\right]\), and the DM's utility function is
    \[u\left(a,\theta\right) = -\mathbf{1}_{\left\{\theta = 1\right\}} (a-1)^2 - \left(1-\mathbf{1}_{\left\{\theta = 1\right\}}\right) a^2\text{.}\]
    Consequently, given belief \(\mu \in \Delta\), the DM's interim expected payoff (which guides her decision) is
    \[-\mu^1 \left(a - 1\right)^2 - \left(1-\mu^1\right)a^2\text{.}\]
    By construction, \(a^*\left(\varphi(\mu)\right) = \varphi^1\left(\mu\right)\), so 
    \[W(\mu) = - \mu^1 \left(\varphi^1\left(\mu\right) - 1\right)^2 - \left(1-\mu^1\right)\varphi^1\left(\mu\right)^2 = -\mu^1 + 2 \mu^1 \varphi^1(\mu) - \varphi^1(\mu)^2\]
    
    Observe that \(W(\mu_n) \to W(\mu)\) if and only if
    \[\varphi^1(\mu_n) (2 \mu^1_n - \varphi^1(\mu_n)) - \varphi^1(\mu)(2 \mu^1 - \varphi^1(\mu)) \to 0\text{,}\]
    which holds if and only if \(\varphi^1(\mu_n) \to \varphi^1(\mu)\) or \(\varphi^1(\mu_n) \to 2 \mu^1 - \varphi^1(\mu)\). 
    
    We have assumed away the first possibility, so if \(\varphi^1(\mu_n) \not\to 2 \mu^1 - \varphi^1(\mu)\), we are done. Accordingly, let \[\varphi^1(\mu_n) \to \tilde{\mu}^1 \coloneqq 2 \mu^1 - \varphi^1(\mu) \neq \varphi^1(\mu)\text{.}\] Observe that this implies \(\varphi^1(\mu) \neq \mu^1 \neq \tilde{\mu}^1\). Because of this, there exist scalars \(\alpha, \beta \in \mathbb{R}\) such that \[\alpha \varphi^1(\mu) + \beta > 0, \quad \alpha \mu^1 + \beta > 0, \quad \text{and} \quad 0 > \alpha \tilde{\mu}^1 + \beta\text{,}\]
    or there exist scalars \(\alpha, \beta \in \mathbb{R}\) such that
    \[\alpha \varphi^1(\mu) + \beta > 0, \quad \alpha \mu^1 + \beta < 0, \quad \text{and} \quad 0 > \alpha \tilde{\mu}^1 + \beta\text{.}\]

    Suppose the first case is feasible. We define a new two-action decision problem with action set \(\left\{a_1,a_2\right\}\), where the payoff to action \(a_2\) is the state-independent \(0\) and the expected payoff to action \(a_1\) (in belief \(\mu\)) is \(\alpha \mu^1 + \beta\). Thus, recalling that \(W(\mu) \coloneqq \mathbb{E}_\mu u\left(a^*\left(\varphi\left(\mu\right)\right),\theta\right)\),
    \[W(\mu_n) \to 0 < \alpha \mu^1 + \beta = W(\mu)\text{,}\]
    and so \(W\) is not continuous at \(\mu\). We omit the second case as it mirrors the first.\end{proof}

    \propxba*
    
\begin{proof}
I'm grateful to an anonymous referee for sketching this approach. Let \(U\) be Blackwell monotone. By Lemma \ref{continuitylemma}, this implies that \(\varphi\) is continuous on \(\Delta^{\circ}\). Suppose there exists some \(x \in \Delta^{\circ}\) for which \(\varphi\left(x\right) = y \neq x\).
\begin{claim}
There exists an \(\varepsilon > 0\) such that for all \(x' \in B_{\varepsilon}(x)\), \(\varphi\left(x'\right) = y\).
\end{claim}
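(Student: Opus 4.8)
The plan is to make the error at $x$ spread to a whole neighborhood by feeding it into the contagion lemmas—Lemma~\ref{expansprop} for expansive errors, Lemma~\ref{contagion1} for contractive ones—and to call on continuity only to ensure that an entire ball around $x$ consists of error points. The argument splits according to whether $\varphi$ produces an expansive error somewhere on $\Delta$.

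\emph{Case 1: $\varphi$ produces an expansive error at some $x_0\in\Delta$.} By Lemma~\ref{degeneratelemma} we may take $x_0\ne\mu$; set $x^*\coloneqq\varphi(x_0)$, so that $x^*\notin\ell(x_0,\mu)$ and in particular $x^*\ne\mu$. First I would show $\varphi\equiv x^*$ on $\inter\Delta$. Let $w\in\partial\Delta$ be the point where the ray from $x_0$ through $\mu$ leaves $\Delta$, so $\mu\in\ell^\circ(x_0,w)$. For a target $z\in\inter\Delta$ not on the line through $x_0$ and $\mu$, let $v\in\partial\Delta$ be where the ray from $\mu$ through $z$ leaves $\Delta$, and take $\rho_B$ to be the Bayes-plausible distribution with affinely independent support $\{x_0,w,v\}$: its convex hull contains $\mu$, and writing $z$ as a convex combination of $\mu$ and $v$ and then expanding $\mu$ over $x_0$ and $w$ exhibits $z$ as a convex combination of $\{x_0,w,v\}$ with strictly positive weight on $x_0$. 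Since $\varphi$ is expansive at the support point $x_0$, Lemma~\ref{expansprop} makes $\varphi$ constant on exactly this family of convex combinations, and evaluating at $x_0$ itself identifies the constant as $x^*$; hence $\varphi(z)=x^*$. (The points $z$ on the segment of the line through $x_0$ and $\mu$ lying on $\mu$'s side of $x_0$ are handled by the binary $\rho_B$ on $\{x_0,w\}$.) This pins $\varphi$ to $x^*$ on $\inter\Delta$ minus a sub-segment of one line—a dense open set—so by continuity $\varphi\equiv x^*$ on all of $\inter\Delta$. (Without continuity one instead re-runs the argument from a second expansive-error point off the line through $x_0$ and $\mu$.) As $x\in\inter\Delta$, $x^*=\varphi(x)=y$, and any $\varepsilon>0$ with $B_\varepsilon(x)\subseteq\inter\Delta$ completes this case.

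\emph{Case 2: $\varphi$ produces no expansive error.} Then, by the elementary facts recorded just before Corollary~\ref{contractinc2}, $\varphi(\mu)=\mu$ and $\varphi(z)\in\ell_z$ for every $z$, so any $z$ with $\varphi(z)\ne z$ has a contractive error. Given such a $z$ (necessarily $z\ne\mu$), let $w\in\partial\Delta$ be where the ray from $z$ through $\mu$ leaves $\Delta$, so $\mu\in\ell^\circ(z,w)$, and let $\rho_B$ be the binary Bayes-plausible distribution on $\{z,w\}$; Lemma~\ref{contagion1} applied with $x_1=z$ gives $\varphi(z)=\mu$. Thus every error point of $\varphi$ is mapped to $\mu$, so $\varphi(x)=\mu$, i.e.\ $y=\mu$. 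Continuity now enters: since $\varphi$ is continuous, $x\ne\mu$, and $\varphi(x)=\mu$, there is $\varepsilon>0$ with $B_\varepsilon(x)\subseteq\inter\Delta$ and $\varphi(x')\ne x'$ for all $x'\in B_\varepsilon(x)$; by the previous sentence each such $x'$ is mapped to $\mu=y$.

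The one delicate point—and the main, if minor, obstacle—is the convex-geometry bookkeeping in Case~1: verifying that $\{x_0,w,v\}$ (and its binary variant) is affinely independent, that its convex hull contains $\mu$ so that $\rho_B$ is a genuine Bayesian distribution, and that the target $z$ is realized with positive weight on $x_0$, together with the observation that the $z$'s the construction misses lie on a single line through $\mu$ and are swept up by continuity. Everything else is a direct appeal to Lemmas~\ref{expansprop} and \ref{contagion1}. Note that Case~1 in fact delivers the stronger conclusion that $\varphi$ is constant on all of $\inter\Delta$, which is exactly what the surrounding proof of the proposition is ultimately after.
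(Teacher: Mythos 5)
Your argument takes a genuinely different route from the paper's, and as written it has two concrete gaps. The paper's proof of this claim is a short, self-contained, local argument: supposing no such ball exists, it extracts a sequence \(x_n \to x\) with \(\varphi\left(x_n\right) \neq y\), builds a two-action decision problem whose indifference hyperplane \(\alpha \cdot w = \beta\) passes through \(y\) while \(\alpha \cdot x < \beta\), and exploits the freedom in the consistent selection \(a^*\) to make \(W\) discontinuous at \(x\), contradicting the convexity of \(W\). It uses neither the continuity of \(\varphi\) nor any of the main-text lemmas; indeed the point of this appendix is to give a proof of Corollary \ref{continuitycorollary} that bypasses the machinery behind Theorem \ref{thebigtheorem}, so routing the claim back through Lemmas \ref{expansprop} and \ref{contagion1} defeats that purpose even where it succeeds.

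As for the gaps: first, the claim is invoked by the surrounding proposition for \(n=2\) as well as \(n\geq 3\), and your proof only covers \(n\geq 3\). In Case 1 the auxiliary point \(v\) off the line through \(x_0\) and \(\mu\) does not exist when \(\Delta\) is one-dimensional, and the conclusion \(\varphi \equiv x^*\) on \(\inter \Delta\) is false for \(n=2\) (cf.\ Lemma \ref{twoexpansprop}). In Case 2 the conclusion you extract from Lemma \ref{contagion1}---that every contractive-error point maps to \(\mu\)---is false for \(n=2\): by Lemma \ref{contractinc} the image is some \(x^{*} \in \left[\mu, x'\right)\), not \(\mu\); the separation argument behind Claim \ref{firststepinfect} needs ambient room, and that lemma lives in the three-or-more-states development. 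Second, in Case 1 your \(\rho_B\) with support \(\left\{x_0, w, v\right\}\) is not Bayes-plausible: since \(\mu \in \ell^{\circ}\left(x_0, w\right)\) lies on an edge of the (affinely independent) triangle, its barycentric weight on \(v\) is zero, so no distribution placing strictly positive probability on all three points averages to \(\mu\), and Lemma \ref{expansprop} cannot be applied to it as constructed. This is repairable---perturb \(w\) off the line so that \(\mu \in \inter \conv \left\{x_0, w, v\right\}\), or append an extra support point as in the footnote to the proof of Proposition \ref{facetheorem}---but as written the step fails. The remaining ingredients (identifying the constant by evaluating at \(x_0\), the density-plus-continuity sweep, and the Case 2 neighborhood argument via continuity of \(\varphi\left(x'\right) - x'\)) are fine for \(n \geq 3\).
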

\begin{proof}
Suppose for the sake of contradiction not. Then there exists a sequence \(\left\{x_n\right\}\) in \(\Delta\) that converges to \(x\) and such that \(y_n \coloneqq \varphi\left(x_n\right) \neq y\) for all \(n \in \mathbb{N}\).

Consider a two-action decision problem in which the payoff (in belief \(\mu\)) to action \(1\) is \(0\) and the payoff to action \(2\) is \(\alpha \mu - \beta\), where \(\alpha \in \mathbb{R}^{n-1}\) and \(\beta \in \mathbb{R}\) are such that \(\alpha y - \beta = 0\) and \(\alpha x - \beta < 0\). We must have i. \(\alpha y_n - \beta \geq 0\) for infinitely many members of the sequence \(\left\{y_n\right\}\); or ii. \(\alpha y_n - \beta < 0\) for infinitely many members of the sequence \(\left\{y_n\right\}\).

i. In the first case, by construction, there is a subsequence \(\left\{y_{n_k}\right\}\) such that \(\alpha y_{n_k} - \beta \geq 0\) for all \(y_{n_k}\). We impose for all such beliefs, the DM takes action \(2\) and for belief \(y\), the DM takes action \(1\). As \(x_n \to x\), \(x_{n_k} \to x\), so 
\[\lim_{n_k \to \infty} W\left(x_{n_k}\right) = \lim_{n_k \to \infty} \alpha x_{n_k} - \beta = \alpha x - \beta < 0 = W(x)\text{.}\]

ii. In the second case, by construction, there is a subsequence \(\left\{y_{n_k}\right\}\) such that \(\alpha y_{n_k} - \beta < 0\) for all \(y_{n_k}\), so for all such beliefs, the DM takes action \(1\). Accordingly, \(W(x_{n_k}) = 0\) for all \(x_{n_k}\). We impose for belief \(y\), the DM takes action \(2\), so 
\[\lim_{n_k \to \infty} W\left(x_{n_k}\right) = 0 > \alpha x - \beta = W(x)\text{.}\]
In both cases, \(W\) is discontinuous at \(x\) and, therefore, non-convex, contradicting that \(U\) is Blackwell monotone. \end{proof}
This claim implies that either \(\varphi^{-1}\left(y\right)\) is open in \(\left(\Delta^{\circ}, \left\|\cdot\right\|_{E}\right)\) or is the union of an open set with \(y\) itself (in the event where \(\varphi\left(y\right) = y\) as well). By the continuity of \(\varphi\), \(\varphi^{-1}\left(y\right)\) is closed in \(\left(\Delta^{\circ}, \left\|\cdot\right\|_{E}\right)\). As \(\Delta^{\circ}\) is connected, if there are three or more states \(\varphi^{-1}(y) = \Delta^{\circ}\), i.e., \(\varphi\) is trivial on \(\Delta^{\circ}\). If \(n = 2\), either \(\varphi^{-1}(y) = \left(0,1\right)\), \(\varphi^{-1}(y) = \left(0,c\right]\), or \(\varphi^{-1}(y) = \left[d,1\right)\). \end{proof} 

\identityeverywhere*
\begin{proof}
    Suppose for the sake of contraposition that there exists some \(x \in \Delta^{\circ} \setminus \Delta\) for which \(\varphi(x) \neq x\). If \(\hat{x}\) does not lie on the line segment between \(x\) and \(\mu_0\), we are done, as there exists a decision problem for which the DM strictly prefers no information to any binary experiment whose distribution over Bayesian posteriors has support on \(x\) (via the strict separating hyperplane theorem).
    
    Thus, for some \(\lambda \in \left[0,1\right)\), \(\hat{x} = \lambda x + (1-\lambda) \mu_0\). But then there exists a decision problem for which a DM prefers the binary experiment that induces \(\rho'\) with support \(\left\{y, \eta \hat{x} + (1-\eta) x\right\}\) to that inducing \(\rho\) with support \(\left\{y,x\right\}\). Accordingly, \(U\) is not Blackwell monotone.\end{proof}

\fullcollapse*
    \begin{proof}
    Suppose that \(\varphi\) is locally trivial, i.e., assume the stated condition. Take a binary experiment, \(\pi\), whose induced Bayesian distribution over posteriors, \(\rho\), is supported on \(\left\{\mu',\mu''\right\}\) for some \(\mu'' \in \Delta\). Then, take another experiment, \(\tilde{\pi}\), whose Bayesian distribution over posteriors, \(\tilde{\rho}\) has support on \(\left\{\mu_1, \mu_2, \mu''\right\}\), where \(\mu_1, \mu_2 \in B_{\varepsilon}(\mu')\) and \(\mathbb{P}_{\rho}(\mu'') = \mathbb{P}_{\tilde{\rho}}(\mu'')\). By construction, \(\tilde{\pi} \succ \pi\), yet for any decision problem and any consistent choice of action, \(\pi\) and \(\tilde{\pi}\) yield the DM the same \textit{ex ante} expected utility, violating strict Blackwell monotonicity. \end{proof}

    Finally, we combine the intermediate results to produce the theorem.
    \begin{proof}[Proof of Theorem \ref{maintheorem}.]
        Proposition \ref{propxba} and Lemma \ref{lemma37} tell us that if \(U\) is strictly Blackwell monotone, \(\varphi\) must be the identity map on \(\Delta^{\circ}\). Lemma \ref{identityeverywhere} states that this property must extend to the boundary of \(\Delta\) as well, which yields the theorem.\end{proof}
        
    \subsection{Theorem \ref{secondmaintheorem} Proof}\label{secondmaintheoremproof}
\secondmaintheorem*

Recall that an updating rule is non-paternalistically Blackwell monotone if and only if in any decision problem \(V(\varphi(\mu))\) is convex.
    \begin{lemma}
        \(V\left(\varphi\left(x\right)\right)\) is convex for all convex \(V\) if and only if \(\varphi(x) = A x + b\) for some \(\left(n-1\right) \times \left(n-1\right)\) matrix \(A\) and \(b \in \mathbb{R}^{n-1}\).
    \end{lemma}
    \begin{proof}
        \noindent \(\left(\Rightarrow\right)\) If \(\varphi\left(x\right) = A x + b\), then for all \(x, x' \in \Delta\) and \(\lambda \in \left(0,1\right)\)
        \[\begin{split}
        V\left(\varphi\left(\lambda x + \left(1-\lambda\right) x'\right)\right) &= V\left(\lambda \left(A x + b\right) + \left(1-\lambda\right)\left(A x' + b\right)\right)\\ &\leq \lambda V\left(A x + b\right) + \left(1-\lambda\right)V\left(A x' + b\right)\\ &= \lambda V\left(\varphi\left(x\right)\right) + \left(1-\lambda\right)V\left(\varphi\left(x'\right)\right)\text{ ,}
        \end{split}\]
        so \(V \circ \varphi\) is convex.

        \noindent \(\left(\Leftarrow\right)\) Suppose for the sake of contraposition that there exist distinct \(x, x' \in \Delta\) and \(\lambda \in \left(0,1\right)\) such that \(\varphi\left(x\right) = A x + b\) and \(\varphi\left(x'\right) = A x' + b\) but \[\tag{\(B.1\)}\label{eqB1}\varphi\left(\lambda x + \left(1-\lambda\right) x'\right) \neq A\left(\lambda x + \left(1-\lambda\right) x'\right) + b\text{.}\]
        Let \(V\left(x\right) = \alpha x\), where \(\alpha \in \mathbb{R}^{n-1}\), so
        \[\tag{\(B.2\)}\label{eqB2}\begin{split}
        \lambda V\left(\varphi\left(x\right)\right) + \left(1-\lambda\right)V\left(\varphi\left(x'\right)\right) &= \lambda \alpha \varphi\left(x\right) + \left(1-\lambda\right) \alpha \varphi\left(x'\right)\\
        &= \lambda \alpha A x + \left(1-\lambda\right) \alpha A x' + \alpha b\\
        &= \alpha A\left(\lambda x + \left(1-\lambda\right) x'\right) + \alpha b
        \text{ .}
        \end{split}\]
        Appealing to Expression \ref{eqB1}, WLOG we assume \[\alpha \left(A\left(\lambda x + \left(1-\lambda\right) x'\right) + b\right) \neq \alpha \varphi\left(\lambda x + \left(1-\lambda\right) x'\right) \] (as otherwise we could just modify \(\alpha\)). If 
        \[\alpha \left(A\left(\lambda x + \left(1-\lambda\right) x'\right) + b\right) < \alpha \varphi\left(\lambda x + \left(1-\lambda\right) x'\right)\text{,}\]
        we have, from Equation \ref{eqB2},
        \[\begin{split}
        \lambda V\left(\varphi\left(x\right)\right) + \left(1-\lambda\right)V\left(\varphi\left(x'\right)\right)
        &= \alpha \left(A\left(\lambda x + \left(1-\lambda\right) x'\right) + b\right)\\
        &< \alpha \varphi\left(\lambda x + \left(1-\lambda\right) x'\right) = V\left(\varphi\left(\lambda x + \left(1-\lambda\right) x'\right)\right)
        \text{.}
        \end{split}\]
        so \(V \circ \varphi\) is not convex. If 
        \[\alpha \left(A\left(\lambda x + \left(1-\lambda\right) x'\right) + b\right) > \alpha \varphi\left(\lambda x + \left(1-\lambda\right) x'\right)\text{,}\]
        we simply define \(V\left(x\right) = - \alpha x\), in which case, again, we have 
        \[\lambda V\left(\varphi\left(x\right)\right) + \left(1-\lambda\right)V\left(\varphi\left(x'\right)\right) < V\left(\varphi\left(\lambda x + \left(1-\lambda\right) x'\right)\right)
        \text{,}\]
        so \(V \circ \varphi\) is not convex. \end{proof}
    We say that \(\varphi(x) = A x + b\) is affine if \(A\) is invertible. It is straightforward to finish proving the theorem. \begin{proof}[Proof of Theorem \ref{secondmaintheorem}.]
        \(\left(\Leftarrow\right)\) It suffices to show that there always exists a strictly convex \(V\) such that \(V \circ \varphi\) is strictly convex. If \(\varphi\left(x\right) = A x + b\) for some invertible \(A\), then \(\varphi\) is injective. Consequently, for all distinct \(x, x' \in \Delta\) and \(\lambda \in \left(0,1\right)\), \(A x \neq A x'\), so 
        \[\begin{split}
        V\left(\varphi\left(\lambda x + \left(1-\lambda\right) x'\right)\right) &= V\left(\lambda \left(A x + b\right) + \left(1-\lambda\right)\left(A x' + b\right)\right)\\ &< \lambda V\left(A x + b\right) + \left(1-\lambda\right)V\left(A x' + b\right)\\ &= \lambda V\left(\varphi\left(x\right)\right) + \left(1-\lambda\right)V\left(\varphi\left(x'\right)\right)\text{ ,}
        \end{split}\]
        so \(V \circ \varphi\) is strictly convex.
        
        \bigskip

        \noindent \(\left(\Rightarrow\right)\) If \(\varphi(x) = A x + b\) is such that \(A\) is not invertible, then there exist distinct \(x, x' \in \Delta\), such that \(A x = A x' = A(\lambda x + (1-\lambda) x')\) for all \(\lambda \in \left[0,1\right]\). But then for any convex \(V\) and \(\lambda \in \left[0,1\right]\), 
        \[\lambda V(\varphi(x)) + (1-\lambda) V(\varphi(x')) = V(\varphi(\lambda x + (1-\lambda) x'))\text{,}\]
        so \(U\) is not strictly non-paternalistically Blackwell monotone. \end{proof}

    \subsection{Remark \ref{remarkremark} Proof}\label{remarkremarkproof}
    We wish to show that the following updating rule is strictly Blackwell monotone.  For any experiment \(\left(\pi,S\right)\), with corresponding \(\rho\), let
\[\sigma(\pi) \coloneqq \mathbb{E}_\rho\left[\frac{\left(\mu-\mu_0\right)^{2}}{\mu_0\left(1-\mu_0\right)}\right]\text{,}\]
\[\ubar{\mu}_\pi \coloneqq \mu_0 (1- \sigma(\pi)), \quad \text{and} \quad \bar{\mu}_\pi \coloneqq \mu_0 + \sigma(\pi) (1-\mu_0)\text{,}\]
and define the experiment-dependent distortion corresponding to the updating rule by
\[\varphi_\pi(\mu) = \begin{cases}
    \ubar{\mu}_\pi, \quad &\text{if} \quad \mu \leq \ubar{\mu}_\pi\\
    \mu, \quad &\text{if} \quad \ubar{\mu}_\pi < \mu < \bar{\mu}_\pi\\
    \bar{\mu}_\pi, \quad &\text{if} \quad \mu \geq \bar{\mu}_\pi\text{.}
\end{cases}\]
    \begin{proof}
Take two experiments \(\pi\) and \(\pi'\) with \(\pi \succ \pi'\). For any decision problem and consistent decision rule, and for any \(\nu \in \left\{\ubar{\mu}_\pi,\bar{\mu}_\pi, \ubar{\mu}_{\pi'},\bar{\mu}_{\pi'}\right\}\), let \[f_\nu(\mu) = \mathbb{E}_\mu u(a^*(\nu),\theta) = \alpha_\nu \mu + \beta_\nu\text{,}\]
    for \(\alpha_\nu, \beta_\nu \in \mathbb{R}\). For each \(\tilde{\pi} \in \left\{\pi,\pi'\right\}\), define \(\bar{A}_{\tilde{\pi}} \subseteq A\) as
\[\bar{A}_{\tilde{\pi}} \coloneqq \left\{a \in A \colon \mathbb{E}_\mu u(a,\theta) = V(\mu) \text{ for some } \mu \in \left(\ubar{\mu}_{\tilde{\pi}},\bar{\mu}_{\tilde{\pi}}\right)\right\}\text{.}\]
We define on \(\left[0,1\right]\) (as \(W_{\pi}\), \(W_{\pi'}\) are defined only on \(\supp \rho\) and \(\supp \rho'\))
    \[\bar{W}_{\pi}(\mu) \coloneqq \max\left\{f_{\ubar{\mu}_\pi}(\mu), f_{\bar{\mu}_\pi}(\mu), \sup_{a \in \bar{A}_{\pi}} \mathbb{E}_\mu u(a,\theta)\right\}\text{,}\]
    and
    \[\bar{W}_{\pi'}(\mu) \coloneqq \max\left\{f_{\ubar{\mu}_{\pi'}}(\mu), f_{\bar{\mu}_{\pi'}}(\mu), \sup_{a \in \bar{A}_{\pi'}} \mathbb{E}_\mu u(a,\theta)\right\}\text{.}\]
    \(\bar{W}_{\pi}\), \(\bar{W}_{\pi'}\), and \(\bar{W}_\pi - \bar{W}_{\pi'}\) are convex, so by Theorem 3.1 in \cite{flexibilitypaper}, 
    \[\label{a1}\tag{A.1}\mathbb{E}_{\rho} W_{\pi}(\mu) = \mathbb{E}_{\rho} \bar{W}_{\pi}(\mu) \geq \mathbb{E}_{\rho'} \bar{W}_{\pi}(\mu) \geq \mathbb{E}_{\rho'} \bar{W}_{\pi'}(\mu) = \mathbb{E}_{\rho'} W_{\pi'}(\mu)\text{.}\]
    
    Strictness is easy: the standard quadratic-loss utility and unit-interval action set decision problem yields the strictly convex \(V(\mu) = - \mu (1-\mu)\) on \(\left[0,1\right]\). Let \(\pi \succ \pi'\). If \(\supp \rho' \subseteq \left[\ubar{\mu}_\pi,\bar{\mu}_\pi\right]\), then in Expression \ref{a1}, \(\mathbb{E}_{\rho} \bar{W}_{\pi}(\mu) > \mathbb{E}_{\rho'} \bar{W}_{\pi}(\mu)\). If \(\supp \rho' \not\subseteq \left[\ubar{\mu}_\pi,\bar{\mu}_\pi\right]\), then in Expression \ref{a1}, \(\mathbb{E}_{\rho'} \bar{W}_{\pi}(\mu) > \mathbb{E}_{\rho'} \bar{W}_{\pi'}\).\end{proof}

\subsection{Proposition \ref{convexprop} Proof}\label{convexpropproof}
\convexprop*

First, we note the following fact, which follows from discussion in \cite{braghieri2023biased} and is an easy consequence of the strict separating hyperplane theorem.
\begin{remark}\label{remarkunder2}
    Let binary experiment \(\left(\pi,S\right)\) induce the Bayesian distribution over posteriors \(\rho\), supported on \(\left\{\mu_1,\mu_2\right\}\) (\(\mu_1 \neq \mu_2\)). If \(U\) is grounded, \(U\) is Blackwell monotone only if \(\hat{\mu}_{2}^{\pi} \in \conv\left\{\mu_0,\mu_2\right\}\) and \(\hat{\mu}_{1}^{\pi} \in \conv\left\{\mu_1,\mu_0\right\}\).\footnote{We adapt the earlier notation so that \(\hat{\mu}^{\pi} \coloneqq \varphi_{\pi}(\mu)\).}
\end{remark}

\begin{lemma}\label{binaryground}
    If \(U\) is grounded, convex, and strictly Blackwell monotone, \(U\) is Bayes' law for any binary experiment.
\end{lemma}
\begin{proof}
    Let \(U\) be grounded. Suppose for the sake of contraposition that \(U\) is not Bayes' law for some binary experiment \(\left(\pi,S\right)\). Appealing to Remark \ref{remarkunder2}, we may specify without loss of generality that the state is binary and that the Bayesian distribution \(\rho\) has support \(\left\{0,1\right\}\) with \(\varphi_\pi(0) \equiv \gamma \in \left(0,\mu_0\right]\) and \(\varphi_\pi(1) \equiv \delta \in \left[\mu_0,1\right]\).
    
    Take five (\textbf{five!})\footnote{This quantity seems excessive, but it seems like we need all five.} additional experiments \(\left(\pi_1,S_1\right)\), \(\left(\pi_2,S_2\right)\), \(\left(\pi_3,S_3\right)\), \(\left(\pi_4,S_4\right)\), and \(\left(\pi_5,S_5\right)\), defined as follows. Let \(\rho^{i}\) (\(i \in \left\{1,\dots,5\right\}\)) denote the distribution over Bayesian posteriors induced by each \(\pi_i\). Then, for \(0 < 4 \eta < \gamma\),
    \[\begin{split}
        \supp \rho &= \left\{0,1\right\}, \ \supp \rho^1 = \left\{0, 4\eta,1\right\}, \ \supp \rho^2 = \left\{2\eta, 1\right\}, \ \supp \rho^3 = \left\{2\eta, 4\eta, 1\right\},\\
        \supp \rho^4 &= \left\{3\eta, 1\right\}, \text{ and } \supp \rho^5 = \left\{4\eta, 1\right\}\text{,}
    \end{split}\]
    with
     \[p \coloneqq \mathbb{P}_{\rho^2}(2\eta), \ \mathbb{P}_{\rho^1}(4\eta) = \mathbb{P}_{\rho^1}(0) = \frac{p}{2}, q \coloneqq \mathbb{P}_{\rho^4}(3\eta), \text{ and } \mathbb{P}_{\rho^3}(4\eta) = \mathbb{P}_{\rho^3}(2\eta) = \frac{q}{2}\text{.}\]
     Note that \[p = \frac{1-\mu_0}{1- 2\eta}\] by Bayes-plausibility. By construction, recalling that \(\succ\) denotes the ``more informative than'' relation, \[\pi \succ \pi_1 \succ \pi_2 \succ \pi_3 \succ \pi_4 \succ \pi_5\text{.}\]
    \begin{claim}\label{claim45}
    If \(U\) is convex and Blackwell monotone, \[\delta \geq \varphi_{\pi_1}(1) = \varphi_{\pi_2}(1) = \varphi_{\pi_3}(1) = \varphi_{\pi_4}(1) = \varphi_{\pi_5}(1) \geq \mu_0\text{.}\]
\end{claim}
\begin{proof}
    Throughout the proof of this claim, we can ignore the other support points (the ones less than \(\mu_0\)) of each \(\rho^i\), as Remark \ref{remarkunder2} tells us that each is mapped by the respective distortions to beliefs less than \(\mu_0\) if \(U\) is Blackwell monotone and grounded (which we specified at the start of the proof of the lemma). Moreover, by Remark \ref{remarkunder2}, \(\varphi_{\pi_i}(1) \geq \mu_0\) for each \(i\) is necessary for Blackwell monotonicity, so we also assume this.
    
    Next, we argue that \[\delta \geq \varphi_{\pi_1}(1) \geq \varphi_{\pi_2}(1) \geq \varphi_{\pi_3}(1) \geq \varphi_{\pi_4}(1) \geq \varphi_{\pi_5}(1)\] if \(U\) is Blackwell monotone. Suppose for the sake of contraposition that \(\varphi_{\pi_4}(1) < \varphi_{\pi_5}(1)\). Take a binary decision problem in which action \(a_2\) yields a state-independent payoff of \(0\) and action \(a_1\) yields a payoff, in belief \(\mu\), of\footnote{We use binary decision problems like this one throughout the proof of this lemma. The key feature of each is that the DM is indifferent between the two actions at some point that lies strictly between the two beliefs we wish to compare--in this instance, \(\varphi_{\pi_4}(1)\) and \(\varphi_{\pi_5}(1)\). Consequently, the DM takes different actions at the two beliefs: as this line is strictly less than \(0\) at \(\varphi_{\pi_4}(1)\), she takes action \(a_2\); and as the line is strictly greater than \(0\) at \(\varphi_{\pi_5}(1)\), she chooses \(a_1\).} 
    \[\mu - \frac{\varphi_{\pi_4}(1) + \varphi_{\pi_5}(1)}{2}\text{.}\]
    The DM's \textit{ex ante} expected payoff from \(\pi_4\) is \(0\) and is strictly positive from \(\pi_5\), so \(U\) is not Blackwell monotone. The analogous arguments produce the rest of the chain, pair by pair.

    Finally, we argue that \(\varphi_{\pi_1}(1) \leq \varphi_{\pi_5}(1)\) if \(U\) is convex and Blackwell monotone. Suppose for the sake of contraposition not, i.e., that  \(\varphi_{\pi_1}(1) > \varphi_{\pi_5}(1)\). By construction, \[\rho^1 =  \frac{1}{2(1-2\eta)} \rho + \frac{1-4 \eta}{2(1-2\eta)} \rho^5\text{.}\]
     Take a binary decision problem in which action \(a_2\) yields a state-independent payoff of \(0\) and action \(a_1\) yields a payoff, in belief \(\mu\), of 
    \[\mu - \frac{\varphi_{\pi_1}(1) + \varphi_{\pi_5}(1)}{2}\text{.}\]
    
    The DM's \textit{ex ante} expected payoff under \(\pi_5\) is \(v_5 \coloneqq 0\). The DM's \textit{ex ante} expected payoff under \(\pi\) is
    \[v \coloneqq \mu_0 \left(1 - \frac{\varphi_{\pi_1}(1) + \varphi_{\pi_5}(1)}{2}\right)\text{,}\]
    and it is 
    \[v_1 \coloneqq \frac{\mu_0-2\eta}{1-2\eta}\left(1 - \frac{\varphi_{\pi_1}(1) + \varphi_{\pi_5}(1)}{2}\right)\] under \(\pi_1\).

    Then,
    \[v_1 - \frac{1}{2(1-2\eta)} v - \frac{1-4 \eta}{2(1-2\eta)} v_5 = \left(\frac{\mu_0-2\eta}{1-2\eta} - \frac{\mu_0}{2(1-2\eta)}\right)\left(1 - \frac{\varphi_{\pi_2}(1) + \varphi_{\pi_3}(1)}{2}\right) > 0\text{,}\] so \(U\) is not convex.
\end{proof}
\begin{claim}\label{claima3}
If \(U\) is Blackwell monotone, for all \(\xi \in \left\{0,4\eta\right\}\) and \(\nu \in \left\{2\eta, 4 \eta\right\}\),
\[\gamma \leq \varphi_{\pi_1}(\xi) \leq \varphi_{\pi_2}(2\eta) \leq \varphi_{\pi_3}(\nu) \leq \varphi_{\pi_4}(3\eta) \leq \varphi_{\pi_5}(4\eta) \leq \mu_0 \text{.}\]
\end{claim}
\begin{proof}
    First, we argue that \[\gamma \leq \varphi_{\pi_2}(2\eta) \leq \varphi_{\pi_4}(3\eta) \leq \varphi_{\pi_5}(4\eta) \leq \mu_0\] is necessary for Blackwell monotonicity. By Remark \ref{remarkunder2}, \(\varphi_{\pi_2}(2\eta) \in \left[2\eta,\mu_0\right]\), \(\varphi_{\pi_4}(3\eta) \in \left[3\eta,\mu_0\right]\) and \(\varphi_{\pi_5}(4\eta) \in \left[4\eta,\mu_0\right]\) are necessary. Now suppose for the sake of contraposition that \(\varphi_{\pi_2}(2\eta) \in \left[2\eta,\gamma\right)\).

    Take a binary decision problem in which action \(a_2\) yields a state-independent payoff of \(0\) and action \(a_1\) a payoff (in belief \(\mu\)) of \[-\left(\mu - \frac{\varphi_{\pi_2}(2\eta) + \gamma}{2}\right)\text{.}\] Under \(\pi\), the DM gets \(0\), as she'll always take action \(a_2\); whereas, under \(\pi_2\), the DM gets a strictly positive expected payoff, a violation of Blackwell monotonicity. The analogous arguments pair-by-pair yields the necessity of the rest of the chain of inequalities.

    Second, we argue that if \(U\) is Blackwell monotone, \(\max\left\{\varphi_{\pi_1}(0), \varphi_{\pi_1}(4\eta)\right\} \leq \varphi_{\pi_2}(2\eta)\) and \(\max\left\{\varphi_{\pi_3}(2\eta), \varphi_{\pi_3}(4\eta)\right\} \leq \varphi_{\pi_4}(3\eta)\). Let us tackle the first of these. Suppose for the sake of contraposition that \(\varphi_{\pi_1}(4\eta) > \varphi_{\pi_2}(2\eta)\). Take a binary decision problem in which action \(a_2\) yields a state-independent payoff of \(0\) and action \(a_1\) a payoff (in belief \(\mu\)) of
    \[-\left(\mu - \frac{\varphi_{\pi_2}(2\eta) + \varphi_{\pi_1}(4\eta)}{2}\right)\text{.}\]
    The DM's expected payoff under \(\pi_1\) is bounded above by
    \[v_1 \coloneqq \frac{p}{2}\left(\frac{\varphi_{\pi_2}(2\eta) + \varphi_{\pi_1}(4\eta)}{2}\right)\text{;}\]
    whereas her expected payoff under \(\pi_2\) is 
    \[v_2 \coloneqq -p\left(2\eta - \frac{\varphi_{\pi_2}(2\eta) + \varphi_{\pi_1}(4\eta)}{2}\right)\text{.}\]
    Then, \(v_2 - v_1\) equals
    \[\frac{p}{2} \left(\frac{\varphi_{\pi_2}(2\eta) + \varphi_{\pi_1}(4\eta)}{2} - 4\eta\right) > \frac{p}{2}\left(\varphi_{\pi_2}(2\eta) - 4\eta\right) > 0\text{,}\]
    a violation of Blackwell monotonicity. The analogous argument yields the necessity of \(\varphi_{\pi_1}(0) \leq \varphi_{\pi_2}(2\eta)\). Likewise, the proof can repeated to argue \(\max\left\{\varphi_{\pi_3}(2\eta), \varphi_{\pi_3}(4\eta)\right\} \leq \varphi_{\pi_4}(3\eta)\).

    Third, we argue that if \(U\) is Blackwell monotone, \(\gamma \leq \min\left\{\varphi_{\pi_1}(0), \varphi_{\pi_1}(4\eta)\right\}\) and \(\varphi_{\pi_2}(2\eta) \leq \min\left\{\varphi_{\pi_3}(2\eta), \varphi_{\pi_3}(4\eta)\right\}\). Let us tackle the first of these. Suppose for the sake of contraposition that \(\varphi_{\pi_1}(0) < \gamma\). Take a binary decision problem in which action \(a_2\) yields a state-independent payoff of \(0\) and action \(a_1\) a payoff (in belief \(\mu\)) of
    \[-\left(\mu - \frac{\max\left\{\varphi_{\pi_1}(0),4\eta\right\} + \gamma}{2}\right)\text{.}\]
    The DM's \textit{ex ante} expected payoff under \(\pi_1\) is strictly positive, whereas it is \(0\) under \(\pi\), a violation of Blackwell monotonicity. Note that we have assumed without loss of generality (or else \(U\) would not be Blackwell monotone) that \(\varphi_{\pi_1}(4 \eta) \in \left[0,\mu_0\right]\).

    Now suppose for the sake of contraposition that \(\varphi_{\pi_1}(4 \eta) < \gamma \leq \varphi_{\pi_1}(0)\). If \(\varphi_{\pi_1}(4 \eta) < 4 \eta\), \(U\) is not Blackwell monotone so we specify \(\varphi_{\pi_1}(4 \eta) \geq 4 \eta\). Take a binary decision problem in which action \(a_2\) yields a state-independent payoff of \(0\) and action \(a_1\) a payoff (in belief \(\mu\)) of
    \[-\left(\mu - \frac{\varphi_{\pi_1}(4\eta) + \gamma}{2}\right)\text{.}\]
    The DM's \textit{ex ante} expected payoff under \(\pi_1\) is strictly positive, whereas it is \(0\) under \(\pi\), a violation of Blackwell monotonicity.

    This proof can be repeated to argue \(\varphi_{\pi_2}(2\eta) \leq \min\left\{\varphi_{\pi_3}(2\eta), \varphi_{\pi_3}(4\eta)\right\}\).\end{proof}
    \begin{claim}\label{claim47}
If \(U\) is convex and Blackwell monotone,
\[\varphi_{\pi_2}(2\eta) = \varphi_{\pi_3}(2\eta) = \varphi_{\pi_3}(4\eta) = \varphi_{\pi_4}(3\eta) = \varphi_{\pi_5}(4\eta) \text{.}\]
\end{claim}
\begin{proof}
    From Claim \ref{claima3}, it suffices to show that \(\max\left\{\varphi_{\pi_1}(0), \varphi_{\pi_1}(4\eta)\right\} \geq \varphi_{\pi_5}(4\eta)\). Suppose for the sake of contraposition not, i.e., that \(\max\left\{\varphi_{\pi_1}(0), \varphi_{\pi_1}(4\eta)\right\} < \varphi_{\pi_5}(4\eta)\). Take a binary decision problem in which action \(a_2\) yields a state-independent payoff of \(0\) and action \(a_1\) a payoff (in belief \(\mu\)) of
    \[-\left(\mu - \frac{\max\left\{\varphi_{\pi_1}(0), \varphi_{\pi_1}(4\eta)\right\} + \varphi_{\pi_5}(4\eta)}{2}\right)\text{.}\]

    The DM's \textit{ex ante} expected payoff under \(\pi_5\) is \(v_5 \coloneqq 0\). Her \textit{ex ante} expected payoff under \(\pi\) is
    \[v \coloneqq (1-\mu_0)\frac{\max\left\{\varphi_{\pi_1}(0), \varphi_{\pi_1}(4\eta)\right\} + \varphi_{\pi_5}(4\eta)}{2}\text{,}\]
    and it is 
    \[\begin{split}
        v_1 &\coloneqq -\frac{1-\mu_0}{1-2\eta}\left(2\eta - \frac{\max\left\{\varphi_{\pi_1}(0), \varphi_{\pi_1}(4\eta)\right\} + \varphi_{\pi_5}(4\eta)}{2}\right) = \frac{1}{1-2\eta} v - 2 \eta \frac{1-\mu_0}{1-2\eta}\text{,}
    \end{split}\]
    under \(\pi_1\). Then,
    \[v_1 - \frac{1}{2(1-2\eta)} v - \frac{1-4 \eta}{2(1-2\eta)} v_5 = \frac{1}{2(1-2\eta)}\left(v - 4 \eta (1-\mu_0)\right) > 0\text{,}\] so \(U\) is not convex.\end{proof}

    \begin{claim}\label{claima5}
    If \(U\) is convex, \(U\) is not strictly Blackwell monotone.
\end{claim}
\begin{proof}
    By Claims \ref{claim45} and \ref{claim47}, if \(U\) is convex and Blackwell monotone,  \(\varphi_{\pi_4}(1) = \varphi_{\pi_3}(1)\) and \(\varphi_{\pi_3}(2\eta) = \varphi_{\pi_3}(4\eta) = \varphi_{\pi_4}(3\eta)\). By construction \(\pi_3 \succ \pi_4\), yet for any decision problem and consistent choice of action, \(\pi_3\) and \(\pi_4\) yield the DM the same \textit{ex ante} expected payoff, so \(U\) is not strictly Blackwell monotone.\end{proof}
    This concludes the proof of the lemma.\end{proof}

    \begin{lemma}\label{fullground}
    If \(U\) is grounded, convex, and strictly Blackwell monotone, \(U\) is Bayes' law for any experiment.\end{lemma}
    \begin{proof}
    Let \(U\) be grounded and convex. Take an arbitrary non-binary experiment \(\left(\pi,S\right)\) and let \(\rho\) be the induced Bayesian distribution over posteriors. Suppose for the sake of contraposition that there exists some \(x \in \supp \rho\) for which \(\varphi_\pi(x) \equiv \hat{x} \neq x\). Consequently, there exists \(\alpha \in \mathbb{R}^{n-1}\) and \(\beta \in \mathbb{R}\) such that 
    \[\alpha x + \beta > 0 > \alpha \hat{x} + \beta\text{.}\]

    Without loss of generality we assume \(\varphi_\pi(\mu) = \mu\) for all \(\mu \in \supp \rho \setminus \left\{x\right\}\). We define \[M \coloneqq \left\{\mu \in \supp \rho \colon \alpha \mu + \beta \geq 0\right\}\text{.}\]
    If \(\supp \rho \setminus M = \emptyset\), we are done, as Remark \ref{remarkunder2} tells us that \(U\) is not Blackwell monotone. 
    
    Suppose that \(\supp \rho \setminus M \neq \emptyset\), and take the binary Bayesian distribution over posteriors obtained by collapsing the posteriors in \(M\) and the posteriors in \(\supp \rho \setminus M\) to their respective barycenters (under \(\rho\)). Call this Bayesian distribution over posteriors \(\rho'\) and denote the experiment that induces it \(\pi'\). Naturally \(\pi \succ \pi'\). Moreover, if \(\varphi_{\pi'}(\mu) \neq \mu\) for any of the two posteriors in \(\supp \rho'\), Lemma \ref{binaryground} tells us that \(U\) is not strictly Blackwell monotone. 

    Suppose, therefore, that \(\varphi_{\pi'}(\mu) = \mu\) for all \(\mu \in \supp \rho'\). But then the DM must strictly prefer \(\pi'\) to \(\pi\), as she makes an error with strictly positive probability under \(\pi\). We conclude that \(U\) is not Blackwell monotone.\end{proof}

    \subsection{Proposition \ref{focusprop} Proof}\label{focuspropproof}
    \focusprop*
    The proposition is the product of Lemmas \ref{lemma6} and \ref{lemma7}. As ever, we need only to establish the necessity of Bayes' law.
    \begin{lemma}\label{lemma6}
        If \(U\) is grounded, focused, and strictly Blackwell monotone, \(U\) is Bayes' law for any binary experiment.
    \end{lemma}
    \begin{proof}
        We begin just as in the proof of Lemma \ref{binaryground}. Let \(U\) be grounded and suppose for the sake of contraposition that \(U\) is not Bayes' law for some binary experiment \(\left(\pi,S\right)\). As before, Remark \ref{remarkunder2} enables us to specify without loss of generality that the state is binary and that the Bayesian distribution \(\rho\) has support \(\left\{0,1\right\}\) with \(\varphi_\pi(0) \equiv \gamma \in \left(0,\mu_0\right]\) and \(\varphi_\pi(1) \equiv \delta \in \left[\mu_0,1\right]\). 

        We take the five additional experiments constructed in Lemma \ref{binaryground}'s proof and add a sixth to the mix; \(\left(\pi_6,S_6\right)\), where \(\pi_6\) is binary and obtained by combining the columns of \(\pi_1\) that produce Bayesian posteriors \(0\) and \(1\) (adding the two elements in each row of the two columns together). Consequently, the Bayesian distribution over posteriors corresponding to \(\pi_6\) has support \(\left\{4\eta, \tau\right\}\), where \(\mathbb{P}_{\rho^1}(4\eta) = \mathbb{P}_{\rho^6}(4\eta)\). Moreover, if \(U\) is focused, \(\varphi_{\pi_6}(4\eta) = \varphi_{\pi_1}(4\eta)\). Note also that 
        \[\pi \succ \pi_1 \succ \pi_2 \succ \pi_3 \succ \pi_4 \succ \pi_5 \succ \pi_6\text{.}\]

        Next, 
        \begin{claim}\label{claima6}
        If \(U\) is Blackwell monotone, for all \(\xi \in \left\{0,4\eta\right\}\) and \(\nu \in \left\{2\eta, 4 \eta\right\}\),
        \[\gamma \leq \varphi_{\pi_1}(\xi) \leq \varphi_{\pi_2}(2\eta) \leq \varphi_{\pi_3}(\nu) \leq \varphi_{\pi_4}(3\eta) \leq \varphi_{\pi_5}(4\eta) \leq \varphi_{\pi_6}(4\eta) \leq \mu_0 \text{.}\]
\end{claim}
\begin{proof}
    Claim \ref{claima3} states the entire chain other than \(\varphi_{\pi_5}(4\eta) \leq \varphi_{\pi_6}(4\eta)\), but the logic for that is identical, so we can conclude it as well.\end{proof}
\begin{claim}\label{claima7}
    If \(U\) is focused and Blackwell monotone,
    \[\varphi_{\pi_2}(2\eta) = \varphi_{\pi_3}(2\eta) = \varphi_{\pi_3}(4\eta) = \varphi_{\pi_4}(3\eta) = \varphi_{\pi_5}(4\eta) = \varphi_{\pi_6}(4\eta) \text{.}\]
\end{claim}
\begin{proof}
    From Claim \ref{claima6}, it suffices to show that \(\varphi_{\pi_1}(4\eta) = \varphi_{\pi_6}(4\eta)\). We have already observed that if this is not true then \(U\) is not focused, so we are done.\end{proof}
\begin{claim}\label{claima8}
    If \(U\) is focused, \(\varphi_{\pi_3}(1) = \varphi_{\pi_4}(1)\).
\end{claim}
\begin{proof}
    This follows immediately from the fact that \(\pi_4\) is obtained from \(\pi_3\) by combining two columns, leaving one (the one that yields posterior \(1\)) untouched.\end{proof}
\begin{claim}\label{claima9}
    If \(U\) is focused, \(U\) is not strictly Blackwell monotone.
\end{claim}
\begin{proof}
    This proof mimics that of Claim \ref{claima5}. By Claims \ref{claima7} and \ref{claima8}, if \(U\) is focused and Blackwell monotone, \(\varphi_{\pi_4}(1) = \varphi_{\pi_3}(1)\) and \(\varphi_{\pi_3}(2\eta) = \varphi_{\pi_3}(4\eta) = \varphi_{\pi_4}(3\eta)\). \(\pi_3 \succ \pi_4\), yet for any decision problem and consistent choice of action, \(\pi_3\) and \(\pi_4\) produce the same \textit{ex ante} expected payoff, so \(U\) is not strictly Blackwell monotone.\end{proof}
    We conclude the lemma.\end{proof}
    \begin{lemma}\label{lemma7}
    If \(U\) is grounded, focused, and strictly Blackwell monotone, \(U\) is Bayes' law for any experiment.
\end{lemma}
\begin{proof}
    The proof is identical to Lemma \ref{fullground}'s proof and so we leave it out.
\end{proof}

\section{Forecasting Errors}\label{forecast}

        An alternative way of evaluating the value of information is one in which we allow for forecast errors from the \textit{ex ante} perspective. In this case, we specify that the anticipated distribution over posteriors is \(\hat{\rho}\) and so the \textit{ex ante} expected payoff is \(\mathbb{E}_{\hat{\rho}}V(\hat{\mu})\). We also assume that the DM understands the martingality of beliefs, so that the anticipated distribution over posteriors, \(\hat{\rho}\), has mean \(\mu_0\).

        Consequently, an updating rule is Blackwell monotone (in this sense) if \(\pi \succeq \pi'\) implies \(\mathbb{E}_{\hat{\rho}}\mathbb{E}_{\hat{\mu}}u(a,\theta) \geq \mathbb{E}_{\hat{\rho}'}\mathbb{E}_{\hat{\mu}}u(a,\theta)\) for any decision problem. Then, we have the following easy result:
        \begin{remark}
    An updating rule is Blackwell monotone if and only if \(\pi \succeq \pi'\) implies \(\hat{\rho}\) is a mean-preserving spread of \(\hat{\rho}'\). \end{remark}

        \section{Two States and Extreme-Belief Aversion}\label{appx}
    Let there be two states (\(n = 2\)). Recall that a DM with updating rule \(U\) displays extreme-belief aversion if there exist two intervals \(C_1 \coloneqq (0,c)\) and \(C_2 \coloneqq (d,1)\) (with \(0 \leq c \leq d \leq 1\)) such that \(\varphi(\mu) = c\) for all \(\mu \in C_1\),  \(\varphi(\mu) = d\) for all \(\mu \in C_2\)  and \(\varphi(\mu) = \mu\) for all \(\mu \in \left[c,d\right]\).

Proposition \ref{propxba} tells us that extreme-belief aversion is implied by Blackwell monotonicity. Now let us show that it is sufficient when \(\varphi\) is continuous.
\begin{proposition}
    Let \(n = 2\). An updating rule \(U\) that systematically distorts beliefs according to a continuous distortion \(\varphi\) is Blackwell monotone if and only if the DM displays extreme-belief aversion.
\end{proposition}
\begin{proof}
Suppose \(U\) displays extreme-belief aversion. Because \(\varphi\) is continuous, \(\varphi(0) = c\) and \(\varphi(d) = 1\). Define \(\bar{A} \subseteq A\) as
\[\bar{A} \coloneqq \left\{a \in A \colon \mathbb{E}_\mu u(a,\theta) = V(\mu) \text{ for some } \mu \in \left(c,d\right)\right\}\text{.}\]
Given the DM's consistent decision rule, let \(f_c(\mu) \coloneqq \alpha_c \mu + \beta_c\) and \(f_d (\mu) \coloneqq \alpha_d \mu + \beta_d\) be the expected payoffs (in belief \(\mu\)) to the actions the DM takes at beliefs \(c\) and \(d\), respectively, where \(\alpha_i, \beta_i \in \mathbb{R}\) for \(i \in \left\{c,d\right\}\). By interim optimality and the definition of \(\bar{A}\), \(f_c(\mu) \geq \sup_{a \in \bar{A}}\mathbb{E}_\mu u(a,\theta)\) for all \(\mu \leq c\). Likewise, \(f_d(\mu) \geq \sup_{a \in \bar{A}}\mathbb{E}_\mu u(a,\theta)\) for all \(\mu \geq d\). Thus, \(W(\mu) = \max\left\{f_c(\mu), f_d(\mu), \sup_{a \in \bar{A}} \mathbb{E}_\mu u(a,\theta)\right\}\), which is convex.\end{proof}
Here is an example of such a rule.
\begin{example} There are two states, \(\Theta = \left\{0,1\right\}\), and the set of actions is the unit interval, \(A = \left[0,1\right]\). The DM's utility function is the standard ``quadratic loss'' utility, translated up by \(.3\) (to make the graph prettier): \(u\left(a,\theta\right) = - \left(a-\theta\right)^2 + .3\). Accordingly, \(V\left(x\right) = - x \left(1-x\right) + .3\). Here is an \href{https://www.desmos.com/calculator/lpmuznley9}{Interactive Link}, where one can adjust the parameters, \(c\) and \(d\), that specify the continuous extreme-belief averse rule by moving the corresponding sliders.\end{example}
The continuity conditional in the proposition merely absolves us from needing to specify where the vertices of the simplex are mapped by \(\varphi\). Without imposing continuity, in the two state setting, Blackwell monotonicity is equivalent to the DM displaying extreme-belief aversion with \(\varphi(0) \leq c\) and \(\varphi(1) \geq d\). A previous draft of this paper (v4 on ArXiv) contains this version of the result in Theorem 3.1.

\bibliography{sample.bib}

\newpage

\section{Online Appendix}

In the main text, we have fixed the DM's full-support prior, but not the decision problem--so that a Blackwell-monotone updating rule is such that more information is more valuable in every decision problem. What if we instead start with a fixed decision problem? Our search, therefore, is for updating rules that generate a positive marginal value of information \textit{for the particular decision problem under consideration}.

Fix a full-support prior \(\mu_0 \in \Delta^{\circ}\) and a finite-action decision problem with at least two actions in which no action is weakly dominated. Let the set of actions be \(A = \left\{a_1, \dots, a_m\right\}\) (with \(m \in \mathbb{N}\), \(m \geq 2\)). Recalling the definition of value function \(V \coloneqq \max_{a \in A}\mathbb{E}_\mu u(a,\theta)\), a \textcolor{OrangeRed}{Problem-Specific Updating Rule} (that systematically distorts beliefs), \(U_V\), is an updating rule for which the DM's posterior is a posterior-separable function of the Bayesian posterior \(\varphi_V \colon \Delta \to \Delta\).

We maintain the specification that the DM's choice at every belief \(\mu \in \Delta\) is consistent (and interim-optimal); \textit{viz.,} depends only on the realized posterior.
\begin{definition}
    A problem-specific updating rule, \(U_V\), is \textcolor{OrangeRed}{Blackwell Monotone} if for any consistent choice of action \(a^* \colon \Delta \to A\), a DM's \textit{ex ante} expected utility from observing experiment \((\pi, S)\) is higher than from observing \((\pi', S')\) if \(\pi \succeq \pi'\), where \(\succeq\) is the (Blackwell) order over experiments.
\end{definition}

A \textcolor{OrangeRed}{Menu} of actions is some nonempty subset \(\bar{A} \subseteq A\). Given a menu of actions \(\bar{A}\), let \(V_{\bar{A}}(\mu) \coloneqq \max_{a \in \bar{A}}\mathbb{E}_\mu u(a,\theta)\) be the DM's value function in Bayesian-belief \(\mu\). We say a problem-specific updating rule generates an effective menu on \(Y \subseteq \Delta\) if for any consistent choice, there exists some \(\bar{A} \subseteq A\) such that \(W = V_{\bar{A}}\) on \(Y\).

\begin{restatable}{theorem}{updateproblem}
    \label{updateproblem}
    A problem-specific updating rule that systematically distorts beliefs is Blackwell monotone if it generates an effective menu on \(\Delta\). It is Blackwell monotone only if it generates an effective menu on \(\Delta^{\circ}\).
\end{restatable}
\begin{proof}
    \(\left(\Leftarrow\right)\) The convexity of \(V_{\bar{A}}\) gives us Blackwell monotonicity.

    \bigskip

    \noindent \(\left(\Rightarrow\right)\) If \(U_V\) is Blackwell monotone, \(W\) is convex on \(\Delta\). This implies that on \(\Delta^{\circ}\), \[W(\mu) = \sup_{j \in J}\left\{\alpha_j \mu + \beta_j\right\}\text{,}\] for some index set \(J\), where \(\alpha_j \in \mathbb{R}^{n-1}\) and \(\beta_j \in \mathbb{R}\) for all \(j \in J\). For each \(a \in A\), let \(f_a (\mu) \coloneqq \alpha_a \mu + \beta_a\) denote its payoff in belief \(\mu\) (where \(\alpha_a \in \mathbb{R}^{n-1}\) and \(\beta_a \in \mathbb{R}\) for all \(a \in A\)). As the set of beliefs at which \(f_a (\mu) \neq f_{a'}(\mu)\) for any distinct pair \(a, a' \in A\) is dense in \(\Delta^{\circ}\), 
    \[W(\mu) = \sup_{j \in J}\left\{\alpha_j \mu + \beta_j\right\} = \max_{a \in \bar{A}} f_{a}(\mu) = V_{\bar{A}}(\mu)\text{,}\] for some \(\bar{A} \subseteq A\) on a dense subset of \(\Delta^{\circ}\), implying that \(W = V_{\bar{A}}\) on \(\Delta^{\circ}\).\end{proof}
The essence of this theorem is that for a fixed decision problem, updating rules that respect the Blackwell order must be behaviorally equivalent to a DM updating to full-support posteriors correctly but ignoring particular actions except possibly on a measure-zero subset of beliefs. Naturally, many different distortions can produce problem-specific Blackwell monotonicity. However, the behavior induced by such rules must take a particular form; namely, the almost-total ignorance of some subset of actions.

\end{document}